\newcommand{\num}[1]{\relax\ifmmode \mathbb #1\else $\mathbb #1$\fi}
\newcommand{\naturals}{{\num N}}
\newcommand{\reals}{{\num R}}
\newcommand{\expect}{\mathbb{E}}
\newcommand{\conv}{\mathit{Conv}}
\newcommand{\controller}{\mathcal{C}}
\newcommand{\J}{{\mathcal{J}}}
\newcommand{\R}{{\mathcal{L}}}
\newcommand{\M}{{\mathcal{M}}}
\newcommand{\N}{{\mathcal{N}}}
\renewcommand{\P}{{\mathcal{P}}}
\newcommand{\Z}{{\mathcal{Z}}}
\newcommand{\pr}{\textit{PR}}
\newcommand{\splitplans}{\mathbb{S}}
\newtheorem{theorem}{Theorem}
\newtheorem{lemma}{Lemma}
\newtheorem{proposition}{Proposition}
\newtheorem{corollary}{Corollary}
\newtheorem{definition}{Definition}
\title{\Large \bf
Throughput Optimal Distributed Traffic Signal Control
}
\author{Tichakorn~Wongpiromsarn,
Tawit~Uthaicharoenpong,
Emilio~Frazzoli,\\
Yu Wang and Danwei Wang
\thanks{T. Wongpiromsarn is with the Ministry of Science and Technology,
Thailand {\tt\footnotesize tichakorn@gmail.com}}
\thanks{T. Uthaicharoenpong is with the Singapore-MIT Alliance for Research and Technology,
Singapore {\tt\footnotesize tawit@smart.mit.edu}}
\thanks{E.  Frazzoli is with the Massachusetts Institute of Technology, Cambridge, MA, USA
  {\tt\footnotesize frazzoli@mit.edu}}
\thanks{Y. Wang and D. Wang are with the Nanyang Technological University, Singapore
{\tt \footnotesize wang0400@e.ntu.edu.sg, edwwang@ntu.edu.sg}}%
}
\begin{document}
\maketitle

\begin{abstract}
We propose a distributed algorithm for controlling traffic signals, allowing constraints such as
periodic switching sequences of phases and minimum and maximum green time to be incorporated.
Our algorithm is adapted from backpressure routing, which has been mainly applied to communication
and power networks.
We formally prove that our algorithm ensures global optimality as it leads to maximum network throughput even though
the controller is constructed and implemented in a completely distributed manner.
\end{abstract}

\section{Introduction}

Traffic signal control 
is an example of cyber-physical systems, featuring tight interaction of
the cyber components and the physical aspects of the system.
It is an important element in traffic management that affects the efficiency of urban transportation.
With growing traffic congestion in many urban areas, 
improving traffic signal control to optimize the efficiency of road networks becomes increasingly important.
Recently, adaptive traffic signal control systems has been employed by many major cities.
These systems improve the performance of traffic management by dynamically adjusting the light timing based on the current traffic situation.
Examples of widely-used adaptive traffic signal control systems include SCATS (Sydney Coordinated Adaptive Traffic System)~\cite{Lowrie82SCATS,Keong93Glide,Liu03Master} and 
SCOOT (Split Cycle Offset Optimisation Technique)~\cite{Day1998,Stevanovic2008}.

Control variables in traffic signal control systems typically include phase, cycle length, split plan and offset.
A phase specifies a combination of one or more traffic movements simultaneously receiving the right of way during a signal interval.
Cycle length is the time required for one complete cycle of signal intervals.
A split plan defines the percentage of the cycle length allocated to each of the phases during a signal cycle.
Offset is used in coordinated traffic control systems to reduce frequent stops at a sequence of junctions.

As explained in \cite{Lowrie82SCATS,Keong93Glide,Liu03Master},
SCATS  attempts to equalize the degree of saturation (DS), 
i.e., the ratio of effectively used green time to the total green time,
for all the approaches.
The computation of cycle length and split plan is only carried out at the critical junctions.
Cycle length and split plan at non-critical junctions are controlled by the critical junctions via offsets.
The algorithm involves many parameters, which need to be properly calibrated for each critical junction.
In addition, all the possible split plans need to be pre-specified and
a voting scheme is used in order to select a split plan that leads to approximately equal DS for all the approaches.

Even though these adaptive traffic signal control systems have been utilized in many cities,
most of them cannot provide any performance guarantee.
Systems and control theory has been recently applied to traffic signal control problems.
In \cite{Diakaki02}, a multivariable regulator is proposed based on linear-quadratic regulator methodology and
the store-and-forward modeling approach \cite{Aboudolas08}.
Robust control theory has been applied to traffic signalization in \cite{Yu97thesis}.
Approaches based on Petri Net modeling language are considered in, e.g., \cite{Mladenovic11thesis,Soares08}.
Optimization-based techniques are considered, e.g., in \cite{Dujardin11,Shen11}.
However, one of the major drawbacks of these approaches is the scalability issue, which limits
their application to relatively small networks.

To address the scalability issue, in \cite{Cheng09}, a distributed algorithm is presented where 
the signal at each junction is locally controlled independently from other junctions.
However, global optimality is no longer guaranteed, although simulation results show that it reduces
the total delay compared to the fixed-time approach.
Another distributed approach is considered in \cite{Lammer08} 
where the constraint that each traffic flow is served once, on average, within a desired service interval $T$ is imposed.
It can be proved that their distributed algorithm
stabilizes the network whenever there exists a stable fixed-time control with cycle time $T$.
However, the knowledge of traffic arrival rates is required.
In addition, multi-phase operation is not considered.

An objective of this work is to develop a traffic signal control strategy that requires minimal tuning
and scales well with the size of the road network while ensuring satisfactory performance.
Our algorithm is motivated by backpressure routing introduced in \cite{Tassiulas92Stability},
which has been mainly applied to communication and power networks 
where a packet may arrive at any node in the network and can only leave the system when it reaches its destination node.
One of the attractive features of backpressure routing is that it leads to maximum network throughput 
without requiring any knowledge about traffic arrival rates
\cite{Tassiulas92Stability,NMR05,Georgiadis06}.

To the authors' knowledge, this is the first time backpressure routing has been adapted
to solve the traffic signal control problem.
Since many assumptions made in backpressure routing are not valid in our traffic signalization application,
certain modifications need to be made to the original algorithm.
With these modifications, we formally prove that our algorithm inherits the desired properties of backpressure routing
as it leads to maximum network throughput even though the signal at each junction is determined completely independently
from the signal at other junctions, and no information about traffic arrival rates is provided.
Furthermore, since our controller is constructed and implemented in a completely distributed manner, 
it can be applied to an arbitrarily large network.
Simulation results show that our algorithm significantly outperforms the SCATS algorithm explained in
\cite{Liu03Master}.

A preliminary version of this work has partially appeared in \cite{Wongpiromsarn:ITSC2012}.
The approach presented in \cite{Wongpiromsarn:ITSC2012}, however, does not allow important constraints such as
periodic switching sequences of phases and minimum and maximum green time to be incorporated.
This paper provides a generalization of \cite{Wongpiromsarn:ITSC2012} by allowing these constraints to be taken into account.
As will be shown later in Section \ref{sec:example}, \cite{Wongpiromsarn:ITSC2012} is a special case of this paper
where the flow rate through a junction is assumed to be constant and no constraints on the minimum and maximum green time
are imposed.

The remainder of the paper is organized as follows: 
We provide useful definitions and existing results concerning network stability in the following section. 
Section \ref{sec:prob} describes the traffic signal control problem considered in this paper.
Our backpressure-based traffic signal control algorithm is described in Section \ref{sec:contr}.
In Section \ref{sec:contr_evaluation}, we formally prove that our algorithm
ensures global optimality as it leads to maximum network throughput, even though the signal at each junction
is determined completely independently from other junctions.
Section \ref{sec:example} presents examples, showing that under a certain assumption on the flow rate,
we can derive the result presented in \cite{Wongpiromsarn:ITSC2012}.
Section \ref{sec:results} provides simulation results, showing that our algorithm offers superior network performance compared
to SCATS.
Finally, Section \ref{sec:discussion} discusses key advantages of the algorithm presented in this paper over existing algorithms 
and Section \ref{sec:conclusions} concludes the paper and discusses future work.

\section{Preliminaries}
Let $\naturals = \{0, 1, \ldots\}$ be the set of natural numbers, including 0.
In this section, we summarize existing results and definitions concerning network stabilility.
We refer the reader to \cite{Tassiulas92Stability,NMR05,Georgiadis06} for more details.

Consider a network modeled by a directed graph with $N$ nodes and $L$ links.
Each node maintains an internal queue of objects to be processed by the network, while
each link $(a,b)$ represents a channel for direct transmission of objects from node $a$ to node $b$.
Suppose the network operates in slotted time $t \in \naturals$. 
Objects may arrive at any node in the network and can only leave the system upon reaching the their destination node.
Let $A_i(t)$ represent the number of objects that exogenously arrives at source node $i$ during slot $t$ and
$U_i(t)$ represent the queue length at node $i$ at time $t$.
We assume that all the queues have infinite capacity.
In addition, only the objects currently at each node at the beginning of slot $t$ can be transmitted during that slot.
Our control objective is to ensure that all queues are stable as defined below. 

\begin{definition}
A network is \emph{strongly stable} if each individual queue $U$ satisfies
\begin{equation}
\limsup_{t \to \infty} \frac{1}{t} \sum_{\tau=0}^{t-1} 1_{[U(\tau) > V]} \to 0 \hbox{ as } V \to \infty,
\end{equation}
where for any event $X$, the indicator function $1_X$ takes the value 1 if X is satisfied 
and takes the value 0 otherwise.
\end{definition}


In this paper, we restrict our attention to strong stability and use the term ``stability'' to refer to strong stability defined above.
%
%
For a network with $N$ queues $U_1, \ldots, U_N$ that evolve according to some probabilistic law, 
a sufficient condition for stability can be provided using Lyapunov drift.

\begin{proposition}
\label{prop:LyapunovStability}
Suppose $\expect\{U_i(0)\} < \infty$ for all $i \in \{1, \ldots, N\}$
and there exist constants $B > 0$ and $\epsilon > 0$ such that
\begin{equation}
\expect \Big\{L(\mathbf{U}(t+1)) - L(\mathbf{U}(t)) \Big| \mathbf{U}(t)\Big\} \leq B - \epsilon \sum_{i=1}^N U_i(t), \forall t \in \naturals,
\end{equation}
where for any queue vector $\mathbf{U} = [U_1, \ldots, U_N]$, $L(\mathbf{U}) \triangleq \sum_{i=1}^N U_i^2$.
Then the network is strongly stable.
\end{proposition}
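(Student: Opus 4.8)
The plan is to show that the Lyapunov drift hypothesis forces the time-averaged expected queue backlog to stay bounded, and then to convert that bound into the strong-stability condition in the Definition. First I would take expectations over the distribution of $\mathbf{U}(t)$ on both sides of the drift inequality, using iterated expectations to obtain the unconditioned recursion
\begin{equation}
\expect\{L(\mathbf{U}(t+1))\} - \expect\{L(\mathbf{U}(t))\} \leq B - \epsilon \sum_{i=1}^N \expect\{U_i(t)\}, \quad \forall t \in \naturals.
\end{equation}
Summing this telescoping inequality over $t = 0, 1, \ldots, T-1$ and using $L(\mathbf{U}) \geq 0$ together with $\expect\{L(\mathbf{U}(0))\} = \sum_i \expect\{U_i(0)^2\} < \infty$ (which is finite because each $\expect\{U_i(0)\} < \infty$; strictly one needs the second moments finite, so I would note this is part of the standing assumptions, or argue it from the model) gives
\begin{equation}
\frac{1}{T} \sum_{t=0}^{T-1} \sum_{i=1}^N \expect\{U_i(t)\} \leq \frac{B}{\epsilon} + \frac{\expect\{L(\mathbf{U}(0))\}}{\epsilon T}.
\end{equation}

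Next I would let $T \to \infty$ to conclude $\limsup_{T \to \infty} \frac{1}{T} \sum_{t=0}^{T-1} \sum_{i=1}^N \expect\{U_i(t)\} \leq B/\epsilon$, and in particular, since all terms are nonnegative, $\limsup_{T \to \infty} \frac{1}{T} \sum_{t=0}^{T-1} \expect\{U_i(t)\} \leq B/\epsilon$ for each individual queue $i$. The final step is to relate this bound on time-averaged mean backlog to the fraction of slots on which a queue exceeds a threshold $V$. Here I would invoke Markov's inequality: $1_{[U_i(\tau) > V]} \leq U_i(\tau)/V$ pointwise, hence $\expect\{1_{[U_i(\tau) > V]}\} \leq \expect\{U_i(\tau)\}/V$. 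Actually, to get a statement about the realized time-average rather than its expectation, I would instead apply Markov's inequality directly to the random variable $\frac{1}{t}\sum_{\tau=0}^{t-1} 1_{[U_i(\tau) > V]}$ after bounding its expectation, or simply observe that the Definition's quantity is naturally read in expectation; combining $\frac{1}{t}\sum_{\tau} 1_{[U_i(\tau)>V]} \le \frac{1}{V}\cdot\frac{1}{t}\sum_{\tau} U_i(\tau)$ with the backlog bound yields $\limsup_{t\to\infty} \frac{1}{t}\sum_{\tau=0}^{t-1} 1_{[U_i(\tau)>V]} \leq \frac{B}{\epsilon V}$, which tends to $0$ as $V \to \infty$, establishing strong stability.

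The main obstacle I anticipate is the interplay between almost-sure time averages and expectations: the Definition is stated for the realized sample-path time average of the indicator, whereas the drift argument naturally controls expectations. Bridging this gap cleanly — either by strengthening the notion to an expectation-based one, by appealing to the fact that the bound holds for $\expect\{U_i(\tau)\}$ and passing through Markov on the averaged quantity, or by citing the equivalence results in \cite{Tassiulas92Stability,NMR05,Georgiadis06} — is the delicate point; the telescoping and Markov steps themselves are routine. A secondary technical point is ensuring $\expect\{L(\mathbf{U}(0))\} < \infty$, i.e., finiteness of second moments at $t=0$, which I would either fold into the hypotheses or derive from boundedness assumptions on arrivals in the model.
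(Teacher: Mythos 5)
The paper itself gives no proof of this proposition: it is stated in the Preliminaries as a known result, with the reader referred to \cite{Tassiulas92Stability,NMR05,Georgiadis06}. Your argument is precisely the standard one found in those references --- iterated expectations to remove the conditioning, telescoping the drift over $t = 0, \ldots, T-1$, discarding the nonnegative term $\expect\{L(\mathbf{U}(T))\}$ to obtain $\limsup_{T \to \infty} \frac{1}{T}\sum_{t=0}^{T-1}\expect\{U_i(t)\} \leq B/\epsilon$, and then converting this backlog bound into the overflow condition via Markov's inequality --- so there is no methodological divergence to report. Both technical points you flag are real but minor, and you resolve them correctly. On the initial condition: the standard statement assumes $\expect\{L(\mathbf{U}(0))\} < \infty$ (finite second moments), which is what the telescoping step actually needs; the paper's hypothesis $\expect\{U_i(0)\} < \infty$ is a slight imprecision, and folding the second-moment condition into the assumptions, as you propose, is the right fix. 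On the sample-path versus expectation issue: the cited works define stability with $\Pr[U(\tau) > V]$ in place of the indicator $1_{[U(\tau) > V]}$, i.e., the overflow fraction is already an expected quantity; under that reading your Markov step $\Pr[U_i(\tau) > V] \leq \expect\{U_i(\tau)\}/V$ closes the argument completely, and the ``delicate point'' you identify is an artifact of how the paper transcribed the definition rather than a gap in your proof.
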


\begin{definition}
\label{def:rate}
An arrival process $A(t)$ is \emph{admissible with rate} $\lambda$ if:
\begin{itemize}
\item The time average expected arrival rate satisfies
\begin{equation*}
\lim_{t \to \infty} \frac{1}{t} \sum_{\tau=0}^{t-1} \expect\{A(\tau)\} = \lambda.
\end{equation*}
\item There exists a finite value $A_{max}$ such that $\expect\{A(t)^2 \hspace{1mm}|\hspace{1mm} \mathbf{H}(t)\} \leq A_{max}^2$
for any time slot $t$, where $\mathbf{H}(t)$ represents the history up to time $t$, i.e., all events that take place during slots $\tau \in \{0,\ldots,t-1\}$.
\item For any $\delta > 0$, there exists an interval size $T$ (which may depend on $\delta$) such that for any initial time $t_0$,
\begin{equation*}
\expect\left\{ \frac{1}{T}  \sum_{k=0}^{T-1} A(t_0 + k) \hspace{1mm}\Big|\hspace{1mm} \mathbf{H}(t_0) \right\} \leq \lambda + \delta.
\end{equation*}
\end{itemize}
\end{definition}

For each node $i$, we define $\lambda_i$ to be the time average rate with which 
$A_i(t)$ is admissible.
Let $\boldsymbol{\lambda} = \left[\lambda_i\right]$ represent the arrival rate vector.


\begin{definition}
\label{def:capacity region}
The \emph{capacity region} $\Lambda$ is the closed region of arrival rate vectors $\boldsymbol{\lambda}$
with the following properties:
\begin{itemize}
\item $\boldsymbol{\lambda} \in \Lambda$ is a necessary condition for network stability, considering
all possible strategies for choosing the control variables 
(including strategies that have perfect knowledge of future events).
\item $\boldsymbol{\lambda}$ strictly interior to $\Lambda$ 
is a sufficient condition for the network to be stabilized by a
policy that does not have a-priori knowledge of future events.
\end{itemize}
\end{definition}

The capacity region essentially describes the set of all arrival rate vectors that can be stably supported by the network.
A scheduling algorithm is said to maximize the network throughput if it stabilizes 
the network for all arrival rates in the interior of $\Lambda$.

\section{The Traffic Signal Control Problem}
\label{sec:prob}
We consider a road network with $N$ links and $L$ signalized junctions.
Specifically, we define a road network as a tuple
$\N = (\R, \J)$ where $\R = \{\R_1, \ldots, \R_{N}\}$
and $\J = \{ \J_1, \ldots, \J_{L}\}$ are sets of all the links and signalized junctions, respectively, in $\N$.
A traffic movement through junction $\J_i, i \in \{1, \ldots, L\}$ is defined as a pair $(\R_a, \R_b)$ where $\R_a, \R_b \in \R$ 
such that a vehicle may enter and exit $\J_i$ through $\R_a$ and $\R_b$, respectively.
A phase of $\J_i$ is defined as a set of traffic movements that simultaneously receiving the right-of-way.

Junction $\J_i, i \in \{1, \ldots, L\}$ is defined by a tuple $\J_i = (\M_i, \P_i, \Z_i)$ where 
$\M_i \subseteq \R^2$ is a set of all the possible traffic movements through $\J_i$, 
$\P_i \subseteq 2^{\M_i}$ is a set of all the possible phases of $\J_i$ and
$\Z_i$ is a finite set of traffic states, each of which captures factors
that affect the traffic flow rate through $\J_i$.
These factors may include, but not limited to, the number of vehicles on the relevant links, road disruptions, traffic and weather conditions.
A typical set of phases of a 4-way junction is shown in Figure \ref{fig:phase-ex}.

\begin{figure}
   \centering 
   \subfigure[]
        {\includegraphics[trim=5.5cm 10cm 13cm 4cm, clip=true, width=0.2\textwidth]{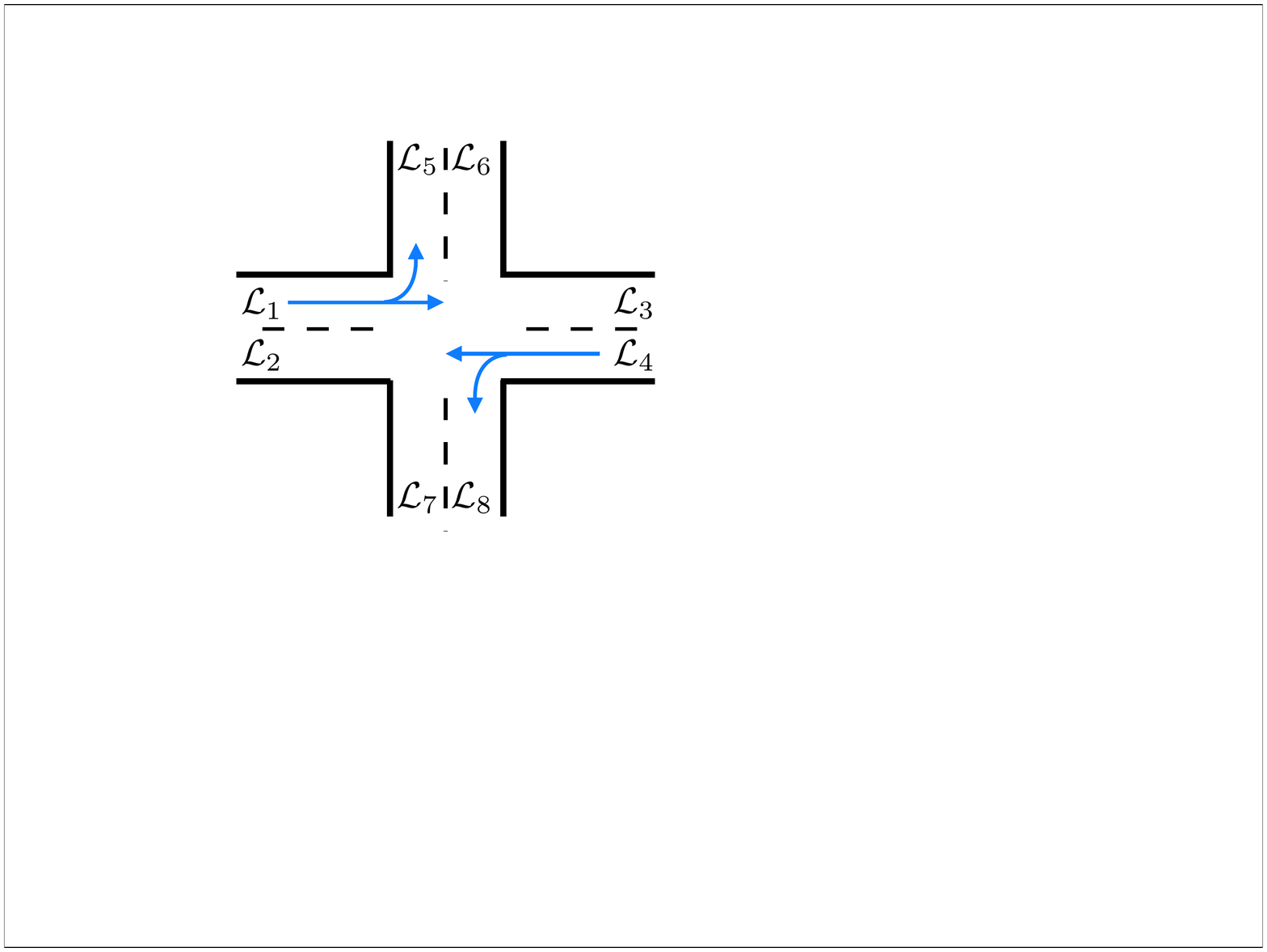}}
   \subfigure[]
        {\includegraphics[trim=5.5cm 10cm 13cm 4cm, clip=true, width=0.2\textwidth]{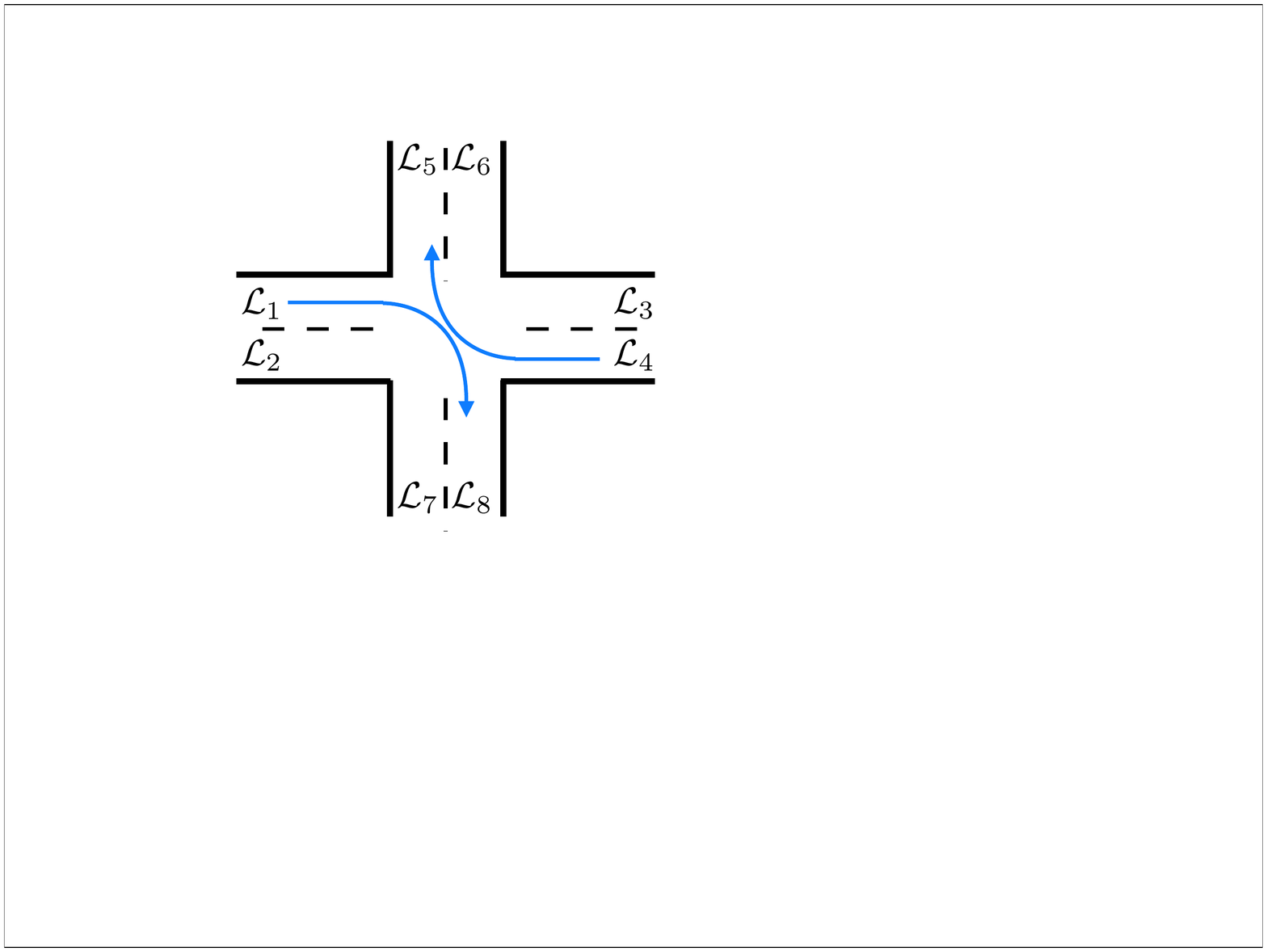}}
   \subfigure[]
        {\includegraphics[trim=5.5cm 10cm 13cm 4cm, clip=true, width=0.2\textwidth]{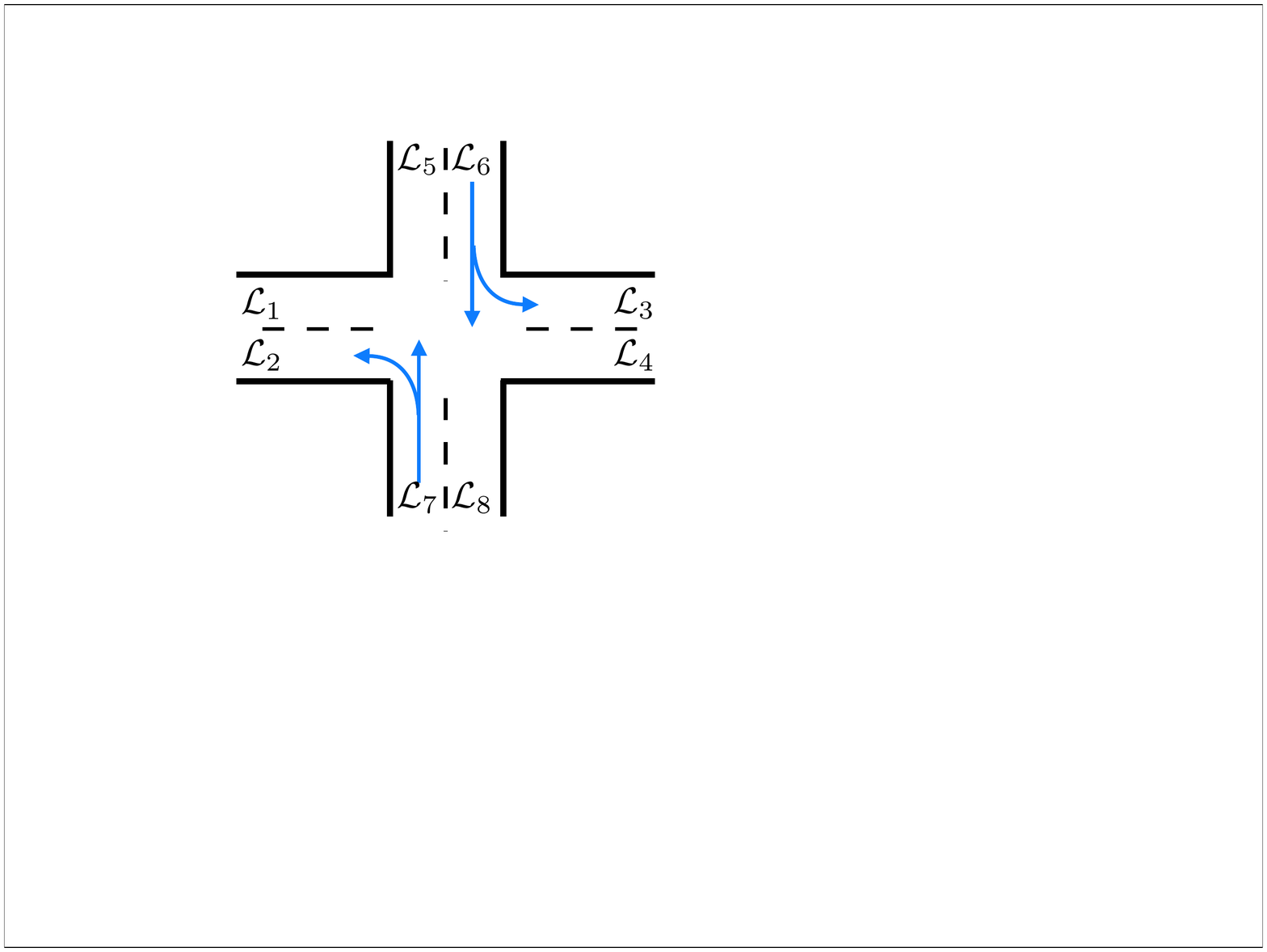}}
   \subfigure[]
        {\includegraphics[trim=5.5cm 10cm 13cm 4cm, clip=true, width=0.2\textwidth]{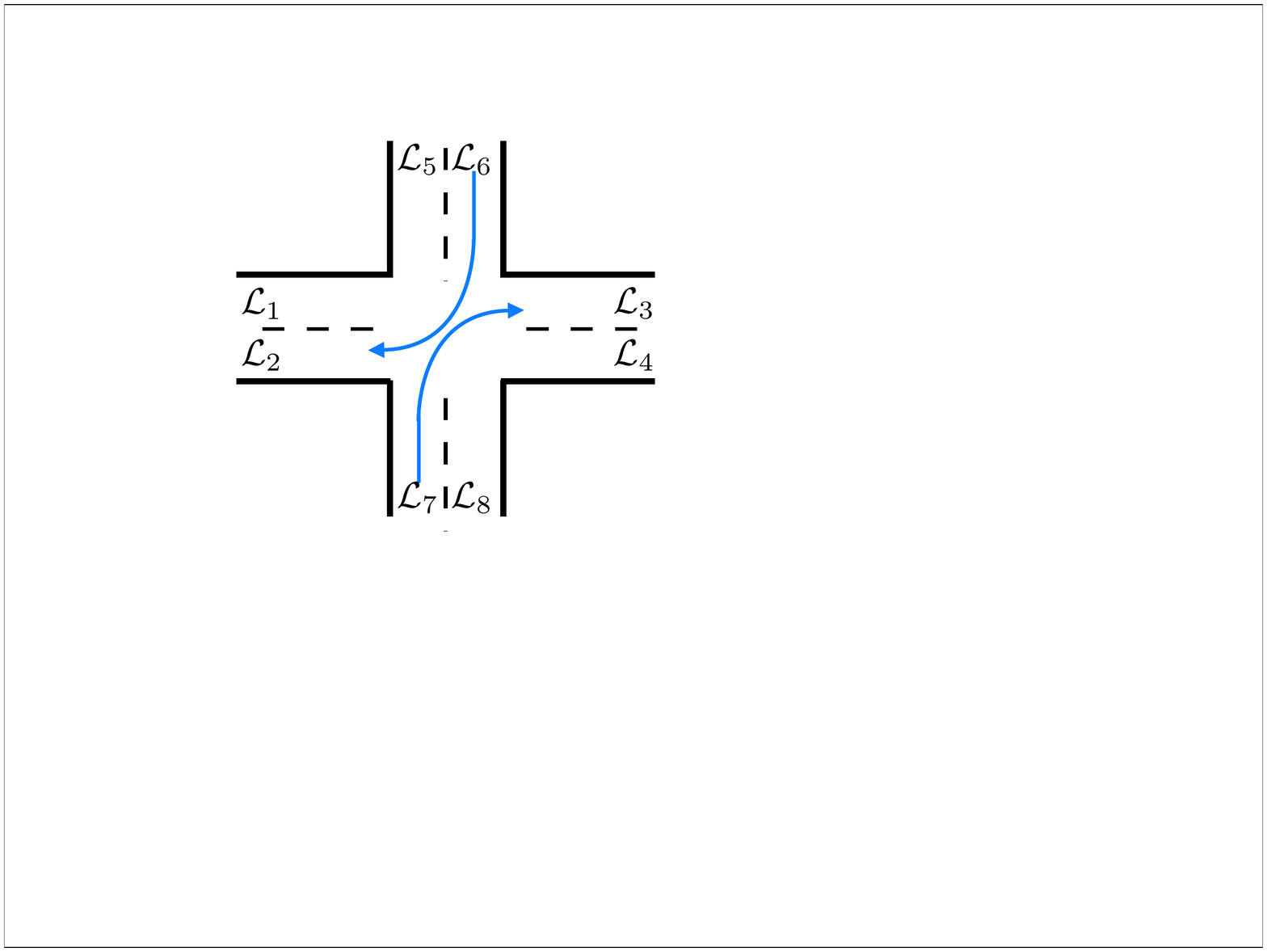}}
   \caption{A typical set $\{\P_1, \P_2, \P_3, \P_4\}$ of phases of a 4-way junction with links $\R_1, \ldots, \R_8$.
   (a) $\P_1 = \{(\R_1, \R_3), (\R_1, \R_5), (\R_4, \R_2), (\R_4, \R_8)\}$,
   (b) $\P_2 = \{(\R_1, \R_8), (\R_4, \R_5)\}$,
   (c) $\P_3 = \{(\R_7, \R_5), (\R_7, \R_2)$, $(\R_6, \R_8), (\R_6, \R_3)\}$, and
   (d) $\P_4 = \{(\R_7, \R_3), (\R_6, \R_2)\}$.}
  \label{fig:phase-ex}
\end{figure}

Vehicles may enter the network at any link at any time.
The traffic signal controller operates in slotted time $t \in \naturals$, monitoring the traffic
and dynamically sets the timing of traffic signals.
Specifically, at the beginning of each time slot, 
the traffic signal controller determines a split plan for each junction. 
A split plan for junction $\J_i$ is defined as a function
$S_i : \P_i \to [0,1]$ that maps each phase $p$ of $\J_i$ 
to the amount of time slot to be allocated to phase $p$ such that
$\sum_{p \in \P_i} S_i(p) = 1$ and for all $p \in \P_i$, $\underline{T}^i_p \leq S_i(p) \leq \overline{T}^i_p$
where $\underline{T}^i_p \in [0,1]$ and $\overline{T}^i_p \in [0,1]$ are the minimum and maximum amount of
time slot that can be allocated to phase $p$.
We let $\splitplans_i = \{S_i : \P_i \to [0,1] \hspace{1mm}|\hspace{1mm} 
\sum_{p \in \P_i} S_i(p) = 1 \hbox{ and } \underline{T}^i_p \leq S_i(p) \leq \overline{T}^i_p, \forall p \in \P_i\}$
be the set of all the possible split plans for junction $\J_i$.
We assume that for each $i \in \{1, \ldots, N\}$, 
$\splitplans_i \not= \emptyset$, i.e.,
there exists a valid split plan for each junction.
Note that this assumption can be satisfied by ensuring that
$\sum_{p \in \P_i} \underline{T}^i_p \leq 1$ and
$\sum_{p \in \P_i} \overline{T}^i_p \geq 1$.

For each $a \in \{1, \ldots, N\}$, $i \in \{1, \ldots, L\}$, $t \in \naturals$, 
we let $Q_a(t) \in \naturals$ and $z_i(t) \in \Z_i$ represent the number of vehicles on $\R_a$
and the traffic state at junction $\J_i$, respectively, at the beginning of time slot $t$.
In addition, for each $a,b \in \{1, \ldots, N\}$, $i \in \{1, \ldots, L\}$ and $p \in \P_i$,
we define a function $\xi^{i}_{a,b} : \splitplans_i \times \Z_i \to \naturals$ such that
$\xi^{i}_{a,b}(S, z)$ gives the number of vehicles 
that can go from $\R_a$ to $\R_b$ through junction $\J_i$ in one time slot
under traffic state $z$ and split plan $S$. 
%
When traffic state $z$ represents the case where the number of vehicles on $\R_a$ that 
seek the movement to $\R_b$ through $\J_i$ is large, 
$\xi^{i}_{a,b}(S, z)$ can be simply obtained by assuming saturated flow.
Note that $\xi^{i}_{a,b}$ does not need to be known a priori
but we assume that it is available at the beginning of each time slot.

In this paper, we consider the traffic signal control problem as stated below.

\textbf{Traffic Signal Control Problem:} Design a traffic signal controller that determines the split plan $S_i \in \splitplans_i$
for each junction $\J_i, i \in \{1, \ldots, L\}$ during each time slot $t \in \naturals$ such that the network throughput is maximized.
We assume that there exists a reliable traffic monitoring system that provides the queue length $Q_a(t)$
and traffic state $z_i(t)$ for each $a \in \{1, \ldots, N\}$, $i \in \{1, \ldots, L\}$  
at the beginning of each time slot $t \in \naturals$ to the controller.

\section{Backpressure-based Traffic Signal Controller}
\label{sec:contr}

In this section, we propose a distributed traffic signal control algorithm 
that employs the idea from backpressure routing 
as described in \cite{Tassiulas92Stability,NMR05,Georgiadis06}.
Unlike most of the traffic signal controllers considered in existing literature, 
our controller can be constructed and implemented in a completely distributed manner,
i.e., the split plan at each junction is determined independently from other junctions,
using only local information, namely the queue length on each of the links associated 
with this junction and the current traffic state around this junction.
No explicit coordination with other junctions is required.
Furthermore, it does not require any knowledge about traffic arrival rates.

Roughly, for each junction $\J_i$, our algorithm picks a split plan that maximizes the ``pressure relief'' at $\J_i$.
The ``pressure relief'' at junction $\J_i$ is defined as 
the sum of the ``pressure relief'' associated with each traffic movement through $\J_i$.
Here, the ``pressure relief'' associated with traffic movement $(\R_a, \R_b)$ is defined as 
the number of vehicles that can go from $\R_a$ to $\R_b$ in one time slot, weighted by
the difference between the number of vehicles on $\R_a$ and on $\R_b$.  
Specifically, consider an arbitrary time slot $t \in \naturals$.
The ``pressure relief'' at $\J_i$ under a split plan $S \in \splitplans_i$
is defined by
\begin{equation}
  \label{eq:PR}
  \pr_i^t(S) \triangleq \sum_{\scriptsize \begin{array}{c}a,b \hbox{ s.t.}\\ (\R_a, \R_b) \in \M_i\end{array}} W_{a,b}(t) \xi^{i}_{a,b}(S, z_i(t)),
\end{equation}
where the weight $W_{a,b}(t)$ is defined for each pair $(\R_a, \R_b) \in \M_i$ by
\begin{equation}
  \label{eq:Wab_traffic}
  W_{a,b}(t) \triangleq Q_a(t) -  Q_b(t). 
\end{equation}

Based on the above description of the backpressure-based traffic signal control algorithm,
our traffic signal controller thus consists of a set of local controllers $\controller_1, \ldots, \controller_L$
where the local controller $\controller_i$ is associated with junction $\J_i$.
These local controllers are constructed and implemented independently
of one another.
(However, a synchronized operation among all the junctions is required so that 
control actions for all the junctions take place according to a common time clock.)
Furthermore, each local controller does not require the global view of the road network.
Instead, it only requires information that is local to the junction with which it is associated.
Consider an arbitrary junction $\J_i \in \J$ and time slot $t \in \naturals$.
The local controller $\controller_i$ picks a split plan $S^*$ for junction $\J_i$ as a solution of the following optimization problem
\begin{equation}
\label{eq:split_comp}
  \max_{S \in \splitplans_i} \pr_i^t(S).
\end{equation}
That is, the controller picks a split plan $S^*$ such that $\pr_i^t(S^*) \geq \pr_i^t(S)$ for all split plans $S \in \splitplans_i$. 
If there exist multiple options of such $S^*$, the controller can pick one arbitrarily.
Note that from Tychonoff theorem, for all $i \in \{1, \ldots, L\}$,
the set $\splitplans_i$ is a compact topological space with the metric induced by the uniform distance
(i.e., the distance between split plans $S$ and $\tilde{S}$ is defined by
$\max\{|S(p) - \tilde{S}(p)| : p \in \P_i\}$).
Hence, if $\xi_{a,b}^i$ is continuous with respect to its first argument for all $a,b \in \{1, \ldots, N\}$ and $i \in \{1, \ldots, L\}$,
then according to the extreme value theorem, $\pr_i^t(S)$ attains its maximum.

Our algorithm is similar in nature to backpressure routing for a single-commodity network.
In \cite{Tassiulas92Stability,NMR05,Georgiadis06}, it has been shown that backpressure routing
leads to maximum network throughput.
However, it is still premature to simply conclude that our backpressure-based traffic signal control algorithm 
inherits this property due to the following reasons.
First, backpressure routing requires that a commodity at least defines the destination of the object.
Implementing the algorithm for a single-commodity network implies that we assume that all the vehicles have a common destination,
which is not a valid assumption for our application.
Second, backpressure routing assumes that the controller has complete control over routing of the traffic around the network whereas
in our traffic signal control problem, the controller does not have control over the route picked by each driver.
Third, backpressure routing assumes that the network controller has control over the flow rate
of each link subject to the maximum rate imposed by the link constraint.
However, the traffic signal controller can only picks a split plan $S_i$ for each junction $\J_i$ but 
does not have control over the flow rate of each traffic movement once $S_i$ is activated.
To account for this lack of control authority, we slightly modify the definition of $W_{a,b}(t)$ from
that used in backpressure routing.
Finally, the optimality result of backpressure routing relies on the assumption that
all the queues have infinite buffer storage space.
Even though it is not reasonable to assume that all the links have infinite queue capacity,
for the rest of the paper, we assume that this is the case.
In practice, our algorithm is expected to work well when each link can accommodate a reasonably long queue.

Before evaluating the performance of our algorithm, 
we first provide its basic property, which is similar to the basic property of backpressure routing. 
%

Let $\Z = \Z_1 \times \ldots \times \Z_L$.
For each $a \in \{1, \ldots, N\}$, we define functionals
$V^{out}_a : \splitplans_1 \times \ldots \times \splitplans_L \times \Z \to \reals$ and 
$V^{in}_a : \splitplans_1 \times \ldots \times \splitplans_L  \times \Z \to \reals$ 
such that for any split plan $S_1, \ldots, S_L$ and traffic state $\mathbf{z} \in \Z$,
\begin{equation}
\begin{array}{rcl}
V^{out}_a(S_1, \ldots, S_L, \mathbf{z}) &=&  
\displaystyle{\sum_{\scriptsize \begin{array}{c}b,i \hbox{ s.t.}\\ (\R_a, \R_b) \in \M_i \end{array}} \hspace{-6mm} \xi^{i}_{a,b}(S_i, z_i)},\\
V^{in}_a(S_1, \ldots, S_L, \mathbf{z}) &=& 
\displaystyle{\sum_{\scriptsize \begin{array}{c}b,i \hbox{ s.t.}\\ (\R_b, \R_a) \in \M_i \end{array}} \hspace{-6mm} \xi^{i}_{b,a}(S_i,  z_i)},
\end{array}
\end{equation}
where for each $i \in \{1, \ldots, L\}$,
$z_i \in \Z_i$ is the element of $\mathbf{z}$ that corresponds to the traffic state of junction $\J_i$.

\begin{lemma}
\label{lem:basic_prop}
Consider an arbitrary time slot $t \in \naturals$.
Let $\mathbf{z} \in \Z$ be a vector of traffic states of
all the junctions during time slot $t$.
For each $i \in \{1, \ldots, L\}$, 
let $S_i^*$ denote a split plan for junction $\J_i$ that is a solution of (\ref{eq:split_comp})
and $\tilde{S}_i$ be an arbitrary split plan for junction $\J_i$.
Then, 
\begin{equation}
\label{eq:lem:basic_prop}
\begin{array}{l}
\displaystyle{\sum_a Q_a(t) \Big( V^{out}_a \big(\tilde{S}_1, \ldots, \tilde{S}_L, \mathbf{z} \big) - 
V^{in}_a \big(\tilde{S}_1, \ldots, \tilde{S}_L, \mathbf{z}\big) \Big)} \\
\hspace{8mm}\leq
\displaystyle{\sum_a Q_a(t) \Big( V^{out}_a \big(S_1^*, \ldots, S_L^*, \mathbf{z} \big) - 
V^{in}_a \big(S_1^*, \ldots, S_L^*, \mathbf{z} \big) \Big)}.
\end{array}
\end{equation}
\end{lemma}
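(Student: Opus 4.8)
The plan is to reduce both sides of (\ref{eq:lem:basic_prop}) to a sum of pressure-relief terms over the junctions. Concretely, I would first establish the identity that, for \emph{any} split plans $S_1,\ldots,S_L$ and the traffic state vector $\mathbf{z}$ of time slot $t$,
$$\sum_a Q_a(t)\Big(V^{out}_a(S_1,\ldots,S_L,\mathbf{z}) - V^{in}_a(S_1,\ldots,S_L,\mathbf{z})\Big) \;=\; \sum_{i=1}^L \pr_i^t(S_i).$$
Granting this, the lemma is immediate: instantiating the identity once with $S_i = \tilde{S}_i$ and once with $S_i = S_i^*$ turns the left-hand and right-hand sides of (\ref{eq:lem:basic_prop}) into $\sum_{i=1}^L \pr_i^t(\tilde{S}_i)$ and $\sum_{i=1}^L \pr_i^t(S_i^*)$, respectively; since each $S_i^*$ is a solution of (\ref{eq:split_comp}) we have $\pr_i^t(\tilde{S}_i) \le \pr_i^t(S_i^*)$ for every $i$, and summing over $i$ yields (\ref{eq:lem:basic_prop}).

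To prove the identity I would expand the definitions of $V^{out}_a$ and $V^{in}_a$ and interchange the order of summation so that everything becomes a single sum over triples $(a,b,i)$ with $(\R_a,\R_b)\in\M_i$. The out-term gives $\sum_a Q_a(t) V^{out}_a = \sum_{(a,b,i):(\R_a,\R_b)\in\M_i} Q_a(t)\,\xi^i_{a,b}(S_i,z_i)$. For the in-term, $\sum_a Q_a(t) V^{in}_a = \sum_{(a,b,i):(\R_b,\R_a)\in\M_i} Q_a(t)\,\xi^i_{b,a}(S_i,z_i)$; renaming the dummy link indices $a \leftrightarrow b$ rewrites this as $\sum_{(a,b,i):(\R_a,\R_b)\in\M_i} Q_b(t)\,\xi^i_{a,b}(S_i,z_i)$, which now ranges over exactly the same index set as the out-term. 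Subtracting, the difference collapses to $\sum_{(a,b,i):(\R_a,\R_b)\in\M_i} \big(Q_a(t)-Q_b(t)\big)\xi^i_{a,b}(S_i,z_i)$. Grouping these triples by the junction index $i$ and using $Q_a(t)-Q_b(t) = W_{a,b}(t)$ from (\ref{eq:Wab_traffic}), the inner sum over pairs $(a,b)$ with $(\R_a,\R_b)\in\M_i$ is exactly $\pr_i^t(S_i)$ as in (\ref{eq:PR}) (here $z_i$ is the component of $\mathbf{z}$ for $\J_i$, which equals $z_i(t)$), and the identity follows.

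The only delicate point — and the one I expect to be the main, though fairly mild, obstacle — is the bookkeeping in the index interchange: one must verify that summing $V^{out}_a$ over all source links $a$ and summing $V^{in}_a$ over all sink links $a$ both enumerate each directed traffic movement $(\R_a,\R_b)\in\M_i$ of each junction $\J_i$ exactly once, so that the two double sums are over a common index set and the subtraction is term-by-term, with no double counting when a link serves as an entry link for some movements and an exit link for others at the same junction. Since $\M_i \subseteq \R^2$ is simply a set of ordered pairs and each sum runs independently over all link indices, this is a matter of writing the sums carefully rather than a genuine difficulty; note in particular that no continuity or compactness property of $\xi^i_{a,b}$ is needed, because $S_i^*$ is given as a solution of (\ref{eq:split_comp}) by hypothesis.
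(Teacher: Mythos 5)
Your proposal is correct and follows essentially the same route as the paper's proof: both rest on the identity $\sum_a Q_a(t)\big(V^{out}_a - V^{in}_a\big) = \sum_i \pr_i^t(S_i)$ (obtained by reindexing the in-flow sum and collecting terms by junction), followed by the per-junction optimality $\pr_i^t(\tilde{S}_i) \leq \pr_i^t(S_i^*)$ summed over $i$. The paper states the identity without spelling out the index interchange, which you carry out explicitly; there is no substantive difference.
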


\begin{proof}
%
First, we note the following identity
\begin{equation}
\label{pf:basic_prop1}
\begin{array}{l}
\displaystyle{\sum_a Q_a(t) \Big( V^{out}_a\big(S_1, \ldots, S_L, \mathbf{z}\big) - V^{in}_a\big(S_1, \ldots, S_L, \mathbf{z}\big) \Big)}\\
\hspace{8mm}=\hspace{-5mm}
\displaystyle{\sum_{\scriptsize \begin{array}{c}a,b,i \hbox{ s.t. }\\ (\R_a, \R_b) \in \M_i \end{array}} \hspace{-6mm} \xi^{i}_{a,b} \big(S_i, z_i \big)
W_{a,b}(t)},
\end{array}
\end{equation}
for any time slot $t \in \naturals$, split plan $S_1, \ldots, S_L$ and traffic state $\mathbf{z} \in \Z$.

Since for each $i \in \{1, \ldots, L\}$, $S_i^*$ is chosen such that $\pr_i^t(S_i^*) \geq \pr_i^t(\tilde{S}_i)$,
we get
\begin{equation}
\label{pf:basic_prop2}
\begin{array}{l}
\displaystyle{\sum_{\scriptsize \begin{array}{c}a,b \hbox{ s.t. }\\ (\R_a, \R_b) \in \M_i \end{array}} \hspace{-6mm} \xi^{i}_{a,b} \big(\tilde{S}_i, z_i \big)
W_{ab}(t)} 
\hspace{3mm}\leq
\displaystyle{\sum_{\scriptsize \begin{array}{c}a,b \hbox{ s.t. }\\ (\R_a, \R_b) \in \M_i \end{array}} \hspace{-6mm} \xi^{i}_{a,b} \big(S_i^*, z_i \big)
W_{ab}(t)},
\end{array}
\end{equation}
for all $i \in \{1, \ldots, L\}$.
The result in (\ref{eq:lem:basic_prop}) can be obtained by summing the inequality in (\ref{pf:basic_prop2}) over
$i \in \{1, \ldots, L\}$ and using the identity in (\ref{pf:basic_prop1}).
\end{proof}

\section{Controller Performance Evaluation}
\label{sec:contr_evaluation}

Let $\Lambda$ be the capacity region of the road network as defined in Definition \ref{def:capacity region}.
Assume that $\mathbf{z}(t) = \left[ z_i(t) \right]$ evolve according to a finite state, irreducible, aperiodic Markov chain.
Let $\pi_\mathbf{z}$ represent the time average fraction of time that $\mathbf{z}(t) = \mathbf{z}$, i.e., with probability 1, we have
$\lim_{t \to \infty} \frac{1}{t} \sum_{\tau = 0}^{t-1} 1_{[\mathbf{z}(\tau) = \mathbf{z}]} = \pi_\mathbf{z}$, for all 
$\mathbf{z} \in \Z$
where $1_{[\mathbf{z}(\tau) = \mathbf{z}]}$ is an indicator function that takes the value 1 if $\mathbf{z}(\tau) = \mathbf{z} $ 
and takes the value 0 otherwise.
In addition, we let $\M = \bigcup_i \M_i$ be the set of all the possible traffic movements. 
For the simplicity of the presentation, we assume that $\M_i \cap \M_j = \emptyset$ for all $i \not= j$.
For each $S_i \in \splitplans_i, i \in \{1, \ldots, L\}$ and $\mathbf{z} \in \Z$,
we define a vector $\boldsymbol{\xi}(S_1, \ldots, S_L, \mathbf{z})$ whose $k^{th}$ element is equal to
$\xi^{i}_{a,b}(S_i, z_i)$ where $(\R_a, \R_b)$ is the $k^{th}$ traffic movement in $\M$,
$i$ is the (unique) index satisfying $(\R_a, \R_b) \in \M_i$ and 
$z_i$ is the $i^{th}$ element of $\mathbf{z}$.
Define
\begin{equation}
\label{eq:Gamma}
\Gamma \triangleq \sum_{\mathbf{z} \in \Z} \pi_\mathbf{z} 
\conv \Big\{ \left[ \boldsymbol{\xi}(S_1, \ldots, S_L, \mathbf{z}) \right] 
\hspace{1mm}\Big|\hspace{1mm} S_i \in \splitplans_i \hbox{ for all } i \in \{1, \ldots, L\} \Big\},
\end{equation}
where for any set $\mathcal{S}$, $\conv\{\mathcal{S}\}$ represents the convex hull of $\mathcal{S}$.

Additionally, we assume that the process of vehicles exogenously entering the network is rate ergodic.
For each $a \in \{1, \ldots, N\}$, let $\lambda_a$ be the time average rate with which the number of new vehicles that exogenously enter
the network at link $\R_a$  during each time slot is admissible.
Let $\boldsymbol{\lambda} = \left[ \lambda_a \right]$ represent the arrival rate vector.

Before deriving the optimality result for our backpressure-based traffic signal control algorithm,
we first characterize the capacity region of the road network, as formally stated in the following lemma.

\begin{lemma}
The capacity region of the network is given by the set $\Lambda$ consisting of
all the rate vectors $\boldsymbol{\lambda}$ such that
there exists a rate vector $\mathbf{G} \in \Gamma$ together with flow variables
$f_{ab}$ for all $a,b \in \{1, \ldots, N\}$ satisfying
\begin{eqnarray}
\label{eq:f-nonnegative}
f_{ab} \geq 0, &&\forall a,b \in \{1, \ldots, N\},\\
\label{eq:f-conservation}
\lambda_a = \sum_b f_{ab} - \sum_c f_{ca}, &&\forall a \in \{1, \ldots, N\},\\
\label{eq:f-zero}
f_{ab} = 0, &&\forall a,b \in \{1, \ldots, N\} \\
\nonumber
&& \hbox{such that } (\R_a, \R_b) \not\in \M,
\end{eqnarray}
\begin{eqnarray}
\label{eq:f-constraint}
f_{ab} = G_{ab}, &&\forall a,b \in \{1, \ldots, N\}  \\
\nonumber
&& \hbox{such that } (\R_a, \R_b) \in \M,
\end{eqnarray}
where $G_{ab}$ is the element of $\mathbf{G}$ that corresponds to the rate of traffic movement
$(\R_a, \R_b)$.
\end{lemma}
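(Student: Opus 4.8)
The plan is to verify, for the set $\Lambda$ described in the statement, the two defining properties of the capacity region from Definition \ref{def:capacity region}: \emph{necessity} (every arrival rate vector that some policy can stabilize lies in $\Lambda$) and \emph{sufficiency} (every $\boldsymbol{\lambda}$ strictly interior to $\Lambda$ is stabilizable by a policy with no knowledge of the future). Throughout I would use the per-link queue recursion $Q_a(t+1) = \max\!\big(Q_a(t) - D_a(t),0\big) + \sum_{b,i : (\R_b,\R_a)\in\M_i}\mu^i_{b,a}(t) + A_a(t)$, where $\mu^i_{a,b}(t) \le \xi^i_{a,b}(S_i(t),z_i(t))$ is the number of vehicles actually crossing $\J_i$ from $\R_a$ to $\R_b$ and $D_a(t) = \sum_{b,i:(\R_a,\R_b)\in\M_i}\mu^i_{a,b}(t)$. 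I would first record that $\Gamma$ is compact, being a finite Minkowski sum over the finite set $\Z$ of convex hulls of continuous images of the compact sets $\splitplans_i$; hence the solution set of (\ref{eq:f-nonnegative})--(\ref{eq:f-constraint}) is closed and bounded, and $\Lambda$, its projection onto $\boldsymbol{\lambda}$, is closed.

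For \textbf{necessity}, suppose some (possibly clairvoyant) policy stabilizes the network. Averaging the queue recursion for link $a$ over $\tau = 0,\ldots,t-1$ and using that strong stability together with the bounded second moment of arrivals forces $Q_a(t)/t \to 0$, the time-average exogenous arrival rate at $\R_a$ equals the time-average net departure rate. Passing to a subsequence of times along which all empirical averages $\frac{1}{t}\sum_\tau \mu^i_{a,b}(\tau)$ converge, and calling the limits $f_{ab}$, I obtain (\ref{eq:f-conservation}); (\ref{eq:f-nonnegative}) and (\ref{eq:f-zero}) are immediate. The crux is (\ref{eq:f-constraint}): I would split the horizon according to the value of $\mathbf{z}(\tau)$ and use that the Markov chain spends a fraction $\pi_\mathbf{z}$ of the slots in state $\mathbf{z}$, so that the empirical average of the offered-service vectors $\boldsymbol{\xi}(S_1(\tau),\ldots,S_L(\tau),\mathbf{z}(\tau))$ converges to a point of $\sum_{\mathbf{z}}\pi_\mathbf{z}\,\conv\{\boldsymbol{\xi}(S_1,\ldots,S_L,\mathbf{z}) : S_i\in\splitplans_i\} = \Gamma$. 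Since actual flow is dominated componentwise by offered service, $[f_{ab}]_{(\R_a,\R_b)\in\M}$ is dominated by such a point, and I would then use that $\Gamma$ (or the relevant face of it) is downward-closed — so that idling/allocating minimum green time realizes any dominated vector — to conclude $[f_{ab}] = \mathbf{G}$ for some $\mathbf{G}\in\Gamma$.

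For \textbf{sufficiency}, let $\boldsymbol{\lambda}$ be strictly interior to $\Lambda$, so $\boldsymbol{\lambda} + \epsilon\mathbf{1}\in\Lambda$ for some $\epsilon > 0$; take the associated $\mathbf{G}\in\Gamma$ and flows $f_{ab}$. By Carathéodory's theorem applied to each state's convex hull, write $\mathbf{G} = \sum_{\mathbf{z}}\pi_\mathbf{z}\sum_k \alpha^{\mathbf{z}}_k\,\boldsymbol{\xi}(S^{(\mathbf{z},k)}_1,\ldots,S^{(\mathbf{z},k)}_L,\mathbf{z})$ with $\alpha^{\mathbf{z}}_k \ge 0$ and $\sum_k\alpha^{\mathbf{z}}_k = 1$. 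Consider the stationary randomized policy that, whenever the observed state is $\mathbf{z}$, activates the split-plan profile $(S^{(\mathbf{z},k)}_1,\ldots,S^{(\mathbf{z},k)}_L)$ with probability $\alpha^{\mathbf{z}}_k$, independently of the queues and the past. In steady state its expected one-slot offered service equals $\mathbf{G}$, so the expected net drain at each link exceeds $\lambda_a$ by $\epsilon$ (up to a bounded transient from the chain's mixing and from the $\min$ truncation, which only helps when queues are small). Plugging $L(\mathbf{U}) = \sum_a U_a^2$ into the drift computation then yields $\expect\{L(\mathbf{U}(t+1)) - L(\mathbf{U}(t)) \mid \mathbf{U}(t)\} \le B - \epsilon\sum_a Q_a(t)$ for a constant $B$ depending on $A_{max}$ and the maximal one-slot service, so Proposition \ref{prop:LyapunovStability} gives strong stability. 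With necessity, sufficiency, and closedness established, $\Lambda$ satisfies the two properties of Definition \ref{def:capacity region}.

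I expect the main obstacle to be the necessity direction — specifically, passing from sample-path departure counts to a flow vector lying \emph{exactly} in $\Gamma$. Two points must be handled simultaneously: (i) the traffic state is Markov-modulated, so the empirical-average service vector must be resolved state by state with the stationary weights $\pi_\mathbf{z}$, which requires the almost-sure ergodic-averaging hypothesis and a bounded-service (uniform integrability) argument to justify interchanging limits; and (ii) actual departures can fall strictly below offered service, so recovering the equality $f_{ab} = G_{ab}$ rather than a mere inequality relies on $\Gamma$ being downward-closed — which I would justify from the structure of $\splitplans_i$ and $\xi^i_{a,b}$ (allocating minimum green to under-used phases) or absorb into an explicit idling option. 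The sufficiency direction, by contrast, is the by-now standard Lyapunov-drift argument for a stationary randomized policy and should present no essential difficulty.
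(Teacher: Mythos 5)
Your overall strategy matches the paper's: verify the two defining properties of Definition \ref{def:capacity region} separately, obtaining necessity by time-averaging cumulative movement counts and extracting convergent (sub)sequences, and sufficiency by exhibiting a stationary randomized policy that realizes $\mathbf{G}$ and closing with the quadratic Lyapunov drift of Proposition \ref{prop:LyapunovStability}. The paper is considerably terser: it delegates both the convergence argument for necessity and the construction of the stationary randomized policy to \cite{NMR05}, so your Carath\'eodory decomposition and state-by-state ergodic averaging with weights $\pi_\mathbf{z}$ are precisely the details being outsourced there.

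The one place your argument genuinely diverges --- and where it would break as written --- is the step recovering $f_{ab} = G_{ab}$ exactly. You model actual departures $\mu^i_{a,b}(t)$ as possibly strictly less than the offered service $\xi^i_{a,b}(S_i(t), z_i(t))$ and then invoke downward-closedness of $\Gamma$ to upgrade the resulting componentwise domination to equality. But $\Gamma$ is not downward closed under the paper's constraints: the minimum green times $\underline{T}^i_p$ may be strictly positive, so no split plan in $\splitplans_i$ can idle a phase (and even with $\underline{T}^i_p = 0$ the requirement $\sum_p S_i(p) = 1$ forces some phase to receive green), so one cannot in general zero out an under-used movement without perturbing the service offered to others. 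The paper avoids this issue entirely by fiat: equation \eqref{eq:F-constraint} in its proof \emph{defines} the cumulative movement count $F_{ab}(t)$ to equal $\sum_\tau \xi^{i}_{a,b}(S_i(\tau), z_i(\tau))$, i.e., actual flow equals offered service, with queue emptiness and driver routing folded into the traffic state $z_i$ (recall that $\Z_i$ captures factors affecting the flow rate, including the number of vehicles on the relevant links). If you want to keep the more physical $\mu \le \xi$ model, you must either replace $\Gamma$ by its downward closure throughout (and re-examine sufficiency accordingly) or adopt the paper's convention; as stated, the claimed downward-closedness is false and the equality \eqref{eq:f-constraint} does not follow from domination alone.
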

\begin{proof}
%
First, we prove that $\boldsymbol{\lambda} \in \Lambda$ is a necessary condition for network stability, 
considering all possible strategies for choosing the control variables (including strategies that have perfect knowledge of future events).
Consider an arbitrary time slot $t$.
For each $a \in \{1, \ldots, N\}$, let $X_a(t)$ denote the total number of vehicles that exogenously enters the road network
at link $\R_a$ during time slots $0, \ldots, t-1$.
Suppose the network can be stabilized by some policy,
possibly one that bases its decisions upon complete knowledge of future arrivals.
For each $a,b \in \{1, \ldots, N\}$, let $Q_a(t)$ represent the number of vehicles left on $\R_a$ at the beginning of time slot $t$ 
and $F_{ab}(t)$ represent the total number of vehicles executing the $(\R_a, \R_b)$ movement during time slots $0, \ldots, t-1$ 
under this stabilizing policy.
Due to flow conservation and link constraints, we have
\begin{equation}
\label{eq:F-nonnegative}
F_{ab}(t) \geq 0,
\end{equation}
\begin{equation}
\label{eq:F-conservation}
X_a(t) - Q_a(t) = \displaystyle{\sum_b F_{ab}(t) - \sum_c F_{ca}(t)},
\end{equation}
\begin{equation}
\label{eq:F-constraint}
F_{ab}(t) = \left\{
\begin{array}{ll}
0, &\hspace{-2mm}\hbox{if } (\R_a, \R_b) \not\in \M,\\
\hspace{-2mm}\displaystyle{\sum_{\tau=0}^{t-1} \xi^{i}_{a,b}(S_i(\tau), z_i(\tau))}, &\hspace{-2mm}\hbox{if } (\R_a, \R_b) \in \M_i
\end{array} \right.
\end{equation}
for all $a,b \in \{1, \ldots, N\}$
where $S_i(\tau)$ and $z_i(\tau)$ are the split plan and traffic state, respectively, of junction $\J_i$ at time slot $\tau$.

For each $a,b \in \{1, \ldots, N\}$, define $f_{ab} \triangleq F_{ab}(\tilde{t})/\tilde{t}$ for some arbitrarily large time $\tilde{t}$.
It is clear from (\ref{eq:F-nonnegative}) and (\ref{eq:F-constraint}) that (\ref{eq:f-nonnegative}) and (\ref{eq:f-zero}) are satisfied.
In addition, we can follow the proof in \cite{NMR05} to show that there exists a sample paths $F_{ab}(t)$
such that $f_{ab}$ comes arbitrarily close to satisfying (\ref{eq:f-conservation}) and (\ref{eq:f-constraint}).
As a result, it can be shown that $\boldsymbol{\lambda}$ is a limit point of the capacity region $\Lambda$.
Since $\Lambda $ is compact and hence contains its limit points, it follows that $\boldsymbol{\lambda} \in \Lambda$.

Next, we show that $\boldsymbol{\lambda}$ strictly interior to $\Lambda$ is a sufficient condition for network stability,
considering only strategies that do not have a-priori knowledge of future events.
Suppose the rate vector $\boldsymbol{\lambda}$ is such that there exists $\boldsymbol{\epsilon} > 0$ such that
$\boldsymbol{\lambda} + \boldsymbol{\epsilon} \in \Lambda$.
Let $\mathbf{G} \in \Gamma$ be a transmission rate vector associated with the input rate vector $\boldsymbol{\lambda} + \boldsymbol{\epsilon}$
according to the definition of $\Lambda$.
It has been proved in \cite{NMR05} that there exists a stationary randomized policy
$\tilde{S}_i(\tau)$ for each $i \in \{1, \ldots, L\}$ that satisfies certain convergence bounds and such that
for each $(\R_a, \R_b) \in \M_i$,
$\lim_{t \to \infty} \frac{1}{t} \sum_{\tau=0}^{t-1} \xi^{i}_{a,b}(\tilde{S}_i(\tau), z_i(\tau)) = G_{ab}$.
In addition, such a policy stabilizes the system.
\end{proof}

\begin{corollary}
\label{cor:randomized}
Suppose 
the traffic state $\mathbf{z}$ is i.i.d. from slot to slot.
Then, $\boldsymbol{\lambda}$ is within the capacity region $\Lambda$
if and only if there exists a stationary randomized control algorithm that determines split plans $S_1, \ldots, S_L$
based only on the current traffic state $\mathbf{z}$, and that yields for all $a \in \{1, \ldots, N\}$, $t \in \naturals$,
\begin{equation}
\begin{array}{c}
\expect \Bigg\{ V^{out}_a \big(S_1, \ldots, S_L, \mathbf{z} \big) 
 - V^{in}_a \big(S_1, \ldots, S_L, \mathbf{z} \big) \Bigg\}
= \lambda_a,
\end{array}
\end{equation}
where the expectation is taken with respect to the random traffic state $\mathbf{z}$ and the (potentially) random control action based on this state.
\end{corollary}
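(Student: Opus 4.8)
The plan is to follow the classical argument of \cite{NMR05}: a rate vector lies in the capacity region exactly when it is achievable in expectation by a stationary randomized policy that keys its decision on the current state. Both implications will be read off the explicit description of $\Lambda$ supplied by the preceding lemma together with the structure of $\Gamma$ in (\ref{eq:Gamma}). Throughout, I use that under the i.i.d.\ assumption the weight $\pi_\mathbf{z}$ in (\ref{eq:Gamma}) is nothing but the common law $\Pr[\mathbf{z}(t)=\mathbf{z}]$ (by the strong law of large numbers applied to the indicators $1_{[\mathbf{z}(\tau)=\mathbf{z}]}$).

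\textbf{The ``if'' direction.} I would start from a stationary randomized policy that, on seeing $\mathbf{z}$, picks split plans $S_1,\ldots,S_L$ (possibly at random) with $\expect\{V^{out}_a - V^{in}_a\} = \lambda_a$ for every $a$, and set $f_{ab} := \expect\{\xi^{i}_{a,b}(S_i,z_i)\}$ for $(\R_a,\R_b)\in\M_i$ and $f_{ab}:=0$ otherwise; let $\mathbf{G}$ carry these values on $\M$. First I would check $\mathbf{G}\in\Gamma$: conditioned on $\mathbf{z}(t)=\mathbf{z}$, the vector $\boldsymbol{\xi}(S_1,\ldots,S_L,\mathbf{z})$ ranges over the finitely many achievable rate vectors for state $\mathbf{z}$, so its conditional expectation is a convex combination of them and hence lies in their convex hull; averaging over $\mathbf{z}$ with weights $\pi_\mathbf{z}$ lands $\mathbf{G}$ in $\Gamma$. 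Constraints (\ref{eq:f-nonnegative}), (\ref{eq:f-zero}), (\ref{eq:f-constraint}) are immediate from the construction (note $\xi^{i}_{a,b}\ge 0$), and (\ref{eq:f-conservation}) is exactly the hypothesis once $V^{out}_a$ and $V^{in}_a$ are expanded and $\M_i\cap\M_j=\emptyset$ is used to rewrite $\sum_b f_{ab} - \sum_c f_{ca} = \expect\{V^{out}_a - V^{in}_a\}$. The preceding lemma then gives $\boldsymbol{\lambda}\in\Lambda$.

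\textbf{The ``only if'' direction.} I would take $\boldsymbol{\lambda}\in\Lambda$, invoke the lemma to obtain $\mathbf{G}\in\Gamma$ and flows $f_{ab}$ satisfying (\ref{eq:f-nonnegative})--(\ref{eq:f-constraint}), and decompose $\mathbf{G}=\sum_{\mathbf{z}\in\Z}\pi_\mathbf{z}\mathbf{G}^{(\mathbf{z})}$ with each $\mathbf{G}^{(\mathbf{z})}$ in the convex hull of $\{\boldsymbol{\xi}(S_1,\ldots,S_L,\mathbf{z}):S_i\in\splitplans_i\}$; by Carath\'eodory's theorem each $\mathbf{G}^{(\mathbf{z})}=\sum_k p_{\mathbf{z},k}\,\boldsymbol{\xi}(S_1^{\mathbf{z},k},\ldots,S_L^{\mathbf{z},k},\mathbf{z})$, a finite convex combination of achievable profiles for state $\mathbf{z}$. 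The policy I would construct is: whenever $\mathbf{z}(t)=\mathbf{z}$, pick the profile $(S_1^{\mathbf{z},k},\ldots,S_L^{\mathbf{z},k})$ with probability $p_{\mathbf{z},k}$, independently of everything else. Then $\expect\{\xi^{i}_{a,b}(S_i,z_i)\mid\mathbf{z}\}=G^{(\mathbf{z})}_{ab}$, so (reading entries off $\M$ as $0$) $\expect\{V^{out}_a\mid\mathbf{z}\}=\sum_b G^{(\mathbf{z})}_{ab}$ and $\expect\{V^{in}_a\mid\mathbf{z}\}=\sum_c G^{(\mathbf{z})}_{ca}$; averaging over $\mathbf{z}$ and using (\ref{eq:f-zero})--(\ref{eq:f-constraint}) to identify $G_{ab}$ with $f_{ab}$ together with (\ref{eq:f-conservation}) gives $\expect\{V^{out}_a - V^{in}_a\}=\sum_b f_{ab} - \sum_c f_{ca}=\lambda_a$. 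Since the states are i.i.d., this identity holds slot by slot, for every $t\in\naturals$.

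\textbf{Main obstacle.} The only delicate point is the ``only if'' direction, specifically turning a point $\mathbf{G}\in\Gamma$ into a policy that depends solely on the \emph{current} state: this is where the i.i.d.\ hypothesis does real work, since it is what allows the per-state Carath\'eodory decomposition to be assembled into a stationary, memoryless randomized rule and, at the same time, upgrades the time-average guarantee to the per-slot equality asserted in the corollary (for a Markov-modulated state one would only recover the time-average version). The remaining items — finiteness of the per-state rate sets, measurability of the randomized rule over the finite state space $\Z$, and the $\M_i\cap\M_j=\emptyset$ bookkeeping — are routine.
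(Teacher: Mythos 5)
Your proof is correct and takes essentially the same route the paper intends: the paper states this corollary without an explicit proof, treating it as a direct consequence of the capacity-region lemma via the standard argument of the cited backpressure literature (decomposing a point of $\Gamma$ into per-state convex combinations to build a stationary randomized policy, and conversely reading expected service rates as flow variables), which is exactly what you reconstruct. The only cosmetic caveat is your phrase ``finitely many achievable rate vectors for state $\mathbf{z}$'' --- since $\splitplans_i$ is a continuum this set need not be finite, but the argument goes through unchanged because the expectation of a bounded random vector lies in the closed convex hull of its support, and Carath\'eodory still yields a finite mixture.
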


Finally, based on the above corollary and the basic property of our backpressure-based traffic signal control algorithm,
we can conclude that our algorithm leads to maximum network throughput.

\begin{theorem}
If there exists $\boldsymbol{\epsilon} > 0$ such that $\boldsymbol{\lambda} + \boldsymbol{\epsilon} \in \Lambda$,
then the proposed backpressure-based traffic signal controller stabilizes the network, provided that
the traffic state $\mathbf{z}$ is i.i.d. from slot to slot.
\end{theorem}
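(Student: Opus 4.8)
The plan is to apply the Lyapunov drift criterion of Proposition~\ref{prop:LyapunovStability} with the quadratic Lyapunov function $L(\mathbf{Q}(t)) = \sum_{a=1}^{N} Q_a(t)^2$. Throughout, I abbreviate a tuple of split plans $(S_1,\dots,S_L)$ by $S$, and I write $S^*$ for the tuple $(S_1^*,\dots,S_L^*)$ selected by the local controllers at slot $t$. First I would fix a slot $t$ and write the one-slot queue evolution. Letting $\mu_a^{out}(t)$ and $\mu_a^{in}(t)$ denote the \emph{actual} numbers of vehicles leaving and entering $\R_a$ during slot $t$ under the controller, and $A_a(t)$ the exogenous arrivals, the infinite-buffer assumption gives $Q_a(t+1) = Q_a(t) - \mu_a^{out}(t) + \mu_a^{in}(t) + A_a(t)$. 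Squaring, summing over $a$, and conditioning on $\mathbf{Q}(t)$ yields
\begin{equation*}
\expect\{L(\mathbf{Q}(t+1)) - L(\mathbf{Q}(t)) \mid \mathbf{Q}(t)\} \le B_0 - 2\sum_a Q_a(t)\,\expect\{\mu_a^{out}(t) - \mu_a^{in}(t) - A_a(t) \mid \mathbf{Q}(t)\},
\end{equation*}
where $B_0 < \infty$ bounds the aggregate second-moment terms (this uses that each $\xi^i_{a,b}$ is bounded and that $A_a(t)$ has bounded conditional second moment, by admissibility).

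Next I would replace the uncontrolled quantities $\mu_a^{out}(t),\mu_a^{in}(t)$ by the potential flows $V_a^{out}(S^*,\mathbf{z}(t))$ and $V_a^{in}(S^*,\mathbf{z}(t))$ induced by the chosen split plans. Let $V_{\max}$ be a uniform bound on the per-slot throughput of any link. Since a link can serve everything that is physically allowed once its queue exceeds $V_{\max}$, one has $\mu_a^{out}(t) \ge V_a^{out}(S^*,\mathbf{z}(t)) - V_{\max}\,1_{[Q_a(t) < V_{\max}]}$, while trivially $\mu_a^{in}(t) \le V_a^{in}(S^*,\mathbf{z}(t))$. The correction term is nonzero only when $Q_a(t)$ is itself bounded by $V_{\max}$, so $Q_a(t)$ times this term is at most $V_{\max}^2$ and can be absorbed into a larger constant $B \ge B_0$. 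This produces
\begin{equation*}
\expect\{L(\mathbf{Q}(t+1)) - L(\mathbf{Q}(t)) \mid \mathbf{Q}(t)\} \le B - 2\sum_a Q_a(t)\Big( V_a^{out}(S^*,\mathbf{z}(t)) - V_a^{in}(S^*,\mathbf{z}(t)) - \lambda_a \Big).
\end{equation*}

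The crucial step is then to lower-bound the bracketed term using Lemma~\ref{lem:basic_prop}. Because $\boldsymbol{\lambda} + \boldsymbol{\epsilon} \in \Lambda$ and $\mathbf{z}$ is i.i.d.\ from slot to slot, Corollary~\ref{cor:randomized} furnishes a stationary randomized policy $\tilde{S}(\mathbf{z})$, depending only on the current traffic state, with $\expect\{V_a^{out}(\tilde{S}(\mathbf{z}),\mathbf{z}) - V_a^{in}(\tilde{S}(\mathbf{z}),\mathbf{z})\} = \lambda_a + \epsilon_a$ for every $a$. For each realization of $\mathbf{z}(t)$, Lemma~\ref{lem:basic_prop} gives $\sum_a Q_a(t)(V_a^{out}(S^*,\mathbf{z}(t)) - V_a^{in}(S^*,\mathbf{z}(t))) \ge \sum_a Q_a(t)(V_a^{out}(\tilde{S}(\mathbf{z}(t)),\mathbf{z}(t)) - V_a^{in}(\tilde{S}(\mathbf{z}(t)),\mathbf{z}(t)))$; and since $\mathbf{z}(t)$ is independent of $\mathbf{Q}(t)$, taking conditional expectations gives
\begin{equation*}
\expect\Big\{\sum_a Q_a(t)\big(V_a^{out}(S^*,\mathbf{z}(t)) - V_a^{in}(S^*,\mathbf{z}(t))\big) \,\Big|\, \mathbf{Q}(t)\Big\} \ge \sum_a Q_a(t)(\lambda_a + \epsilon_a).
\end{equation*}
Substituting and writing $\epsilon_{\min} = \min_a \epsilon_a > 0$, the drift is bounded by $B - 2\epsilon_{\min}\sum_a Q_a(t)$, and strong stability follows from Proposition~\ref{prop:LyapunovStability}.

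The step I expect to be the main obstacle is the second one: rigorously passing from the actual, controller-independent flows $\mu_a^{out},\mu_a^{in}$ to the potential flows $V_a^{out},V_a^{in}$. This is precisely where the modified weight $W_{a,b}(t) = Q_a(t) - Q_b(t)$ earns its keep and where the controller's lack of authority over realized flow rates has to be absorbed, and it demands care that every ``deficit'' term multiplies only a bounded queue length. A secondary point is that admissibility of $A_a(t)$ (Definition~\ref{def:rate}) is stated as a $T$-slot average rather than a per-slot bound, so a fully rigorous proof may have to run the same argument over a window of $T$ slots (bounding $\expect\{L(\mathbf{Q}(t+T)) - L(\mathbf{Q}(t)) \mid \mathbf{Q}(t)\}$), as in \cite{NMR05}; this changes only the bookkeeping, not the structure of the argument.
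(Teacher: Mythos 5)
Your proposal is correct and follows essentially the same route as the paper: quadratic Lyapunov drift, Lemma~\ref{lem:basic_prop} to show the backpressure choice dominates any alternative weighted-flow sum, comparison against the stationary randomized policy from Corollary~\ref{cor:randomized}, and Proposition~\ref{prop:LyapunovStability} to conclude. The only differences are that you spell out explicitly the passage from actual to potential flows and the $T$-slot averaging caveat, which the paper compresses into ``by simple manipulations'' and the constant $B$.
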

\begin{proof}
Consider an arbitrary policy $\tilde{S}_1, \ldots, \tilde{S}_L$ and time slot $t \in \naturals$.
By simple manipulations, we get
\begin{equation*}
\hspace{-2mm}
\begin{array}{l}
L(\mathbf{Q}(t+1)) - L(\mathbf{Q}(t)) \leq B -  \\ 
\hspace{3mm}2
\displaystyle{\sum_{a} Q_a(t) \Big( V^{out}_a \big(\tilde{S}_1, \ldots, \tilde{S}_L, \mathbf{z}(t) \big) - A_a(t) - V^{in}_a \big( \tilde{S}_1, \ldots, \tilde{S}_L, \mathbf{z}(t) \big) \Big)},
\end{array}
\end{equation*}
where $A_a(t)$ is the number of vehicle that exogenously enter the network at link $\R_a$ during time slot $t$,
\begin{equation*}
\begin{array}{rcl}
B &=&
\displaystyle{\sum_a \Bigg( \Big( \sup_{\scriptsize \begin{array}{c}S_1 \in \splitplans_1, \ldots,\\ S_L \in \splitplans_L, \mathbf{z} \in \Z\end{array}}  
V^{out}_a \big( S_1, \ldots, S_L, \mathbf{z}(t) \big) \Big)^2} +\\
&&\displaystyle{\Big( A_a^{max} + 
\sup_{\scriptsize \begin{array}{c}S_1 \in \splitplans_1, \ldots,\\S_L \in \splitplans_L, \mathbf{z} \in \Z\end{array}}  
V^{in}_a \big( S_1, \ldots, S_L, \mathbf{z}(t) \big) \Big)^2 \Bigg)}
\end{array}
\end{equation*}
and $A_a^{max}$ satisfies $A_a(t) \leq A_a^{max}, \forall t$. Hence, we get
\begin{equation*}
\begin{array}{l}
\expect \Big\{ L(\mathbf{Q}(t+1)) - L(\mathbf{Q}(t)) \Big| \mathbf{Q}(t) \Big\} \leq
\displaystyle{B + 2\sum_a Q_a(t) \expect \Big\{ A_a(t) \Big| \mathbf{Q}(t) \Big\} -}\\
\hspace{12mm} 2 \sum_{a} Q_a(t) \expect\Big\{ V^{out}_a \big( \tilde{S}_1, \ldots, \tilde{S}_L, \mathbf{z}(t) \big) - V^{in}_a \big( \tilde{S}_1, \ldots, \tilde{S}_L, \mathbf{z}(t) \big) \Big| \mathbf{Q}(t)\Big\}
\end{array}
\end{equation*}

However, from Lemma \ref{lem:basic_prop}, the proposed backpressure-based traffic signal controller minimizes
the final term on the right hand side of the above inequality over all possible alternative policies $\tilde{S}_1, \ldots, \tilde{S}_L$.
But since $\boldsymbol{\lambda} + \boldsymbol{\epsilon} \in \Lambda$, according to Corollary \ref{cor:randomized},
there exists a stationary randomized algorithm that makes phase decisions based only on
the current traffic state $\mathbf{z}(t)$ and that yields for all $a \in \{1, \ldots, N\}$, $t \in \naturals$,
\begin{equation*}
\begin{array}{c}
\expect \Big\{ V^{out}_a \big( \tilde{S}_1, \ldots, \tilde{S}_L, \mathbf{z}(t) \big) - V^{in}_a \big( \tilde{S}_1, \ldots, \tilde{S}_L, \mathbf{z}(t) \big) 
 \Big| \mathbf{Q}(t) \Big\}
= \lambda_a + \epsilon.
\end{array}
\end{equation*}
Hence, we get that when the proposed backpressure-based traffic signal controller is used,
\begin{equation*}
\expect \Big\{ L(\mathbf{Q}(t+1)) - L(\mathbf{Q}(t)) \Big | \mathbf{Q}(t) \Big\} \leq B - 2\epsilon\sum_a Q_a(t),
\end{equation*}
and from Proposition \ref{prop:LyapunovStability}, we can conclude that the network is stable.
\end{proof}

\section{Examples}
\label{sec:example}
In this section, we consider a special case where the flow rate through a junction for each phase is constant and only depends on the traffic state, i.e.,
\begin{equation}
\label{eq:xi_const_flow}
\xi^{i}_{a,b}(S, z) = \sum_{\scriptsize \begin{array}{c}p \in \P_i \hbox{ s.t.}\\ (\R_a, \R_b) \in p\end{array}} \alpha^{i,p}_{a,b}(z) S(p),
\end{equation}
where for each $i \in \{1, \ldots, L\}$, $a,b \in \{1, \ldots, N\}$ and $p \in \P_i$, $\alpha^{i,p}_{a,b} : \Z_i \to \reals$ such that
$\alpha^{i,p}_{a,b}(z)$ gives the (constant) rate (i.e., the number of vehicles per time slot) at which vehicles can go from $\R_a$ to $\R_b$ through junction $\J_i$ when the traffic state is $z$ and phase $p$ is activated.
For a time slot $t \in \naturals$, we define the ``pressure relief''  associated phase $p \in \P_i$ at junction $\J_i, i \in \{1, \ldots, L\}$ by
$\pr_i^t(p) \triangleq \sum_{(\R_a,\R_b) \in p} W_{a,b}(t) \alpha^{i,p}_{a,b}(z(t))$.

Consider an arbitrary junction $\J_i, i \in \{1, \ldots, L\}$.
For the unconstrained case where $\underline{T}^i_p = 0$ and $\overline{T}^i_p = 1$ for all $p \in \P_i$,
it can be checked that a solution $S^*$ of (\ref{eq:split_comp}) agrees with that presented in \cite{Wongpiromsarn:ITSC2012}.
In this case, $S^*$ is given by
\begin{equation*}
  S^*(p) = \left\{ \begin{array}{ll} 
    1 &\hbox{ if } p = p^*,\\
    0 &\hbox{ otherwise},
  \end{array} \right.
\end{equation*}
where $p^*$ is a phase satisfying
$\pr_i^t(p^*) \geq \pr_i^t(p)$
for all $p \in \P_i$.
That is, in each time slot, the controller activates only one phase with the maximum associated pressure relief.

For a more general case where $\underline{T}^i_p, \overline{T}^i_p \in [0, 1]$ for all $p \in \P_i$,
we compute an order
$p_1, p_2, \ldots, p_{r_i}$ where $p_j \in \P_i$ for all $j \in \{1, \ldots, r_i\}$ and $r_i$ is the cardinality of $\P_i$
such that
$\pr_i^t(p_j) \geq \pr_i^t(p_k)$ for all $k > j$.
A solution $S^*$ of (\ref{eq:split_comp}) is then given by
\begin{eqnarray*}
S^*(p_1) &=& \min(\overline{T}^i_{p_1}, 1 - \sum_{j=2}^{r_i} \underline{T}^i_{p_j}),\\
S^*(p_k) &=& \min(\overline{T}^i_{p_k}, 1 - \sum_{j=1}^{k-1} S^*(p_j) - \sum_{j=k+1}^{r_i} \underline{T}^i_{p_j}),
\qquad \forall k \in \{2, \ldots, r_i\}.
\end{eqnarray*}
Note that since the number of possible phases for each junction is typically small (e.g., less than 10), the above computation
and ordering of phases $p_1, p_2, \ldots, p_{r_i} \in \P_i$
can be practically performed in real time.

\section{Simulation Results}
\label{sec:results}
In this section, we employ a microscopic traffic simulator MITSIMLab \cite{Ben-Akiva01}, 
whose simulation models have been validated against traffic data collected from Swedish cities,
to evaluate the performance of our backpressure-based traffic signal control algorithm
in comparison with the SCATS-like algorithm as explained in \cite{Liu03Master}.
A medium size road network of the Marina Bay area of Singapore
with 112 links and 14 signalized junctions as shown in Figure \ref{fig:sim-MITSIM-network} is considered.
We implement the SCATS-like and our backpressure-based traffic signal control algorithms 
in the traffic management simulator component of MITSIMLab.
For the SCATS-like implementation, the number of possible split plans for each junction ranges from 5 to 17.
The standard space time under saturated flow for each vehicle is assumed to be 0.96 seconds.
The maximum, minimum and medium cycle lengths are set to 140 seconds, 60 seconds and 100 seconds, respectively.
The degrees of saturation that would result in the maximum, minimum and medium cycle lengths are assumed to be 0.9, 0.3 and 0.5, respectively. Finally, the split plan is computed based on the vote from the last 5 cycles.

Two implementations of the backpressure-based algorithms are used in the evaluation.
The first implementation, denoted by UBP, represents the unconstrained case where 
$\underline{T}^i_p = 0$ and $\overline{T}^i_p = 1$ for all $p \in \P_i$, $i \in \{1, \ldots, 14\}$
whereas
the second implementation, denoted by CBP, represents the constrained case with 
$\underline{T}^i_p = 0.15$ and $\overline{T}^i_p = 0.7$ for all $p \in \P_i$, $i \in \{1, \ldots, 14\}$.
In both implementations, we assume that the flow rate through a junction for each phase during each time slot is constant
so that the function $\xi^{i}_{a,b}$ can be written as in (\ref{eq:xi_const_flow}) where
the constant $\alpha^{i,p}_{a,b}(z)$ is obtained from the corresponding flow rate in the previous time slot.

Using two case studies, the performance of both algorithms is evaluated based on different measures, 
including queue length, delay and number of stops.
In the first case study, the origin-destination pairs are calibrated such that the traffic
resulting from applying the SCATS-like algorithm closely matches the real situation 
(for which the parameters of the SCATS-like algorithm were calibrated).
In the second case study, we perturb the calibrated origin-destination pairs
to illustrate the robustness of our backpressure-based traffic signal control algorithm.  

\begin{figure}
   \centering 
        \includegraphics[trim=5.5cm 2.7cm 1.5cm 1.6cm, clip=true, width=0.6\textwidth]{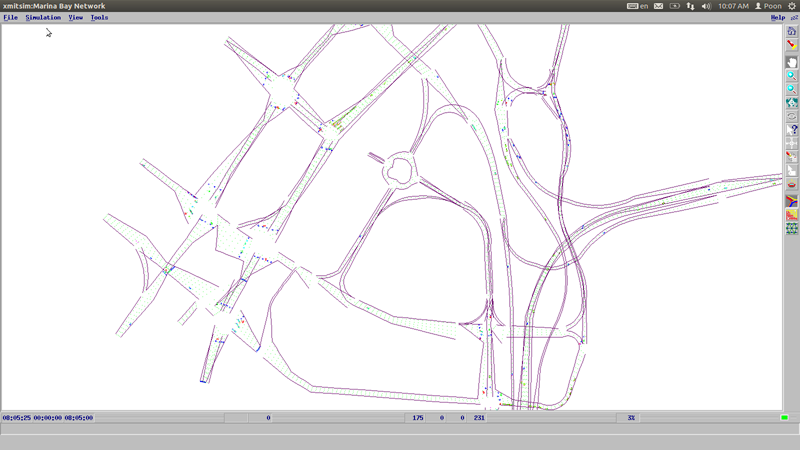}
   \caption{Road network used in the MITSIMLab simulation.}
  \label{fig:sim-MITSIM-network}
\end{figure}

\subsection{Calibrated Origin-Destination Pairs}
In this case study, the origin-destination pairs have been calibrated such that the traffic
resulting from applying the SCATS-like algorithm closely matches the real situation.
Vehicles exogenously enter and exit the network at various links based on 61 different origin-destination pairs.
The vehicle arrival rate varies with time and ranges from 10543 vehicles/hour to 13341 vehicles/hour.
The maximum and average queue lengths are shown in Figure \ref{fig:sim-results-MITSIM1}, 
illustrating that the performance of our algorithm, both in the unconstrained and constrained cases,
is comparable to that of the SCATS-like algorithm.
Overall, the maximum queue lengths for the SCATS-like algorithm, UBP and CBP are
129, 119 and 98, respectively,
whereas the average queue lengths for the SCATS-like algorithm, UBP and CBP are approximately
7.4, 6.1 and 5.4, respectively.

\begin{figure}
   \centering 
        \includegraphics[trim=3cm 0cm 3cm 1cm, clip=true, width=0.49\textwidth]{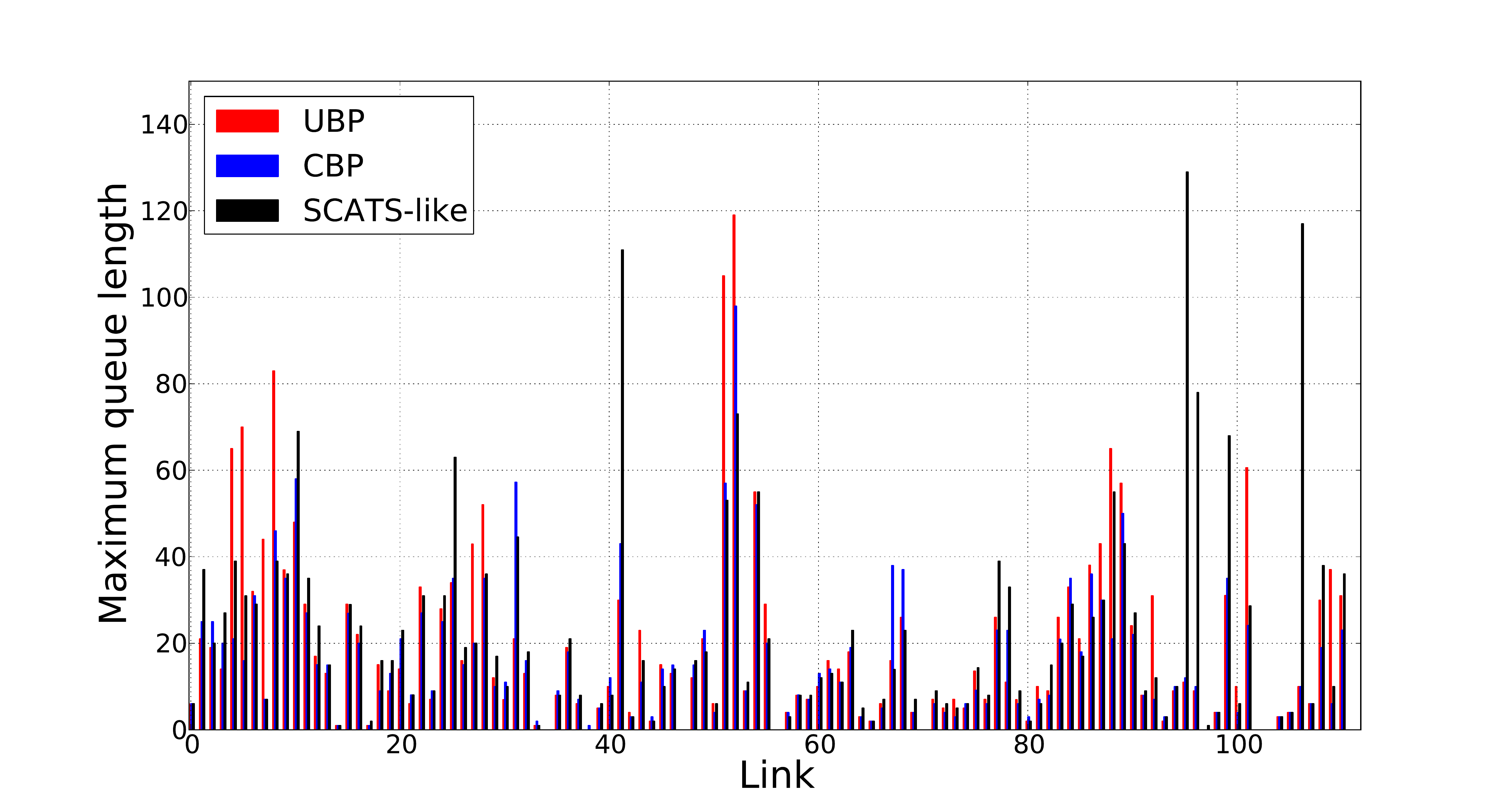}
        \includegraphics[trim=3cm 0cm 3cm 1cm, clip=true, width=0.49\textwidth]{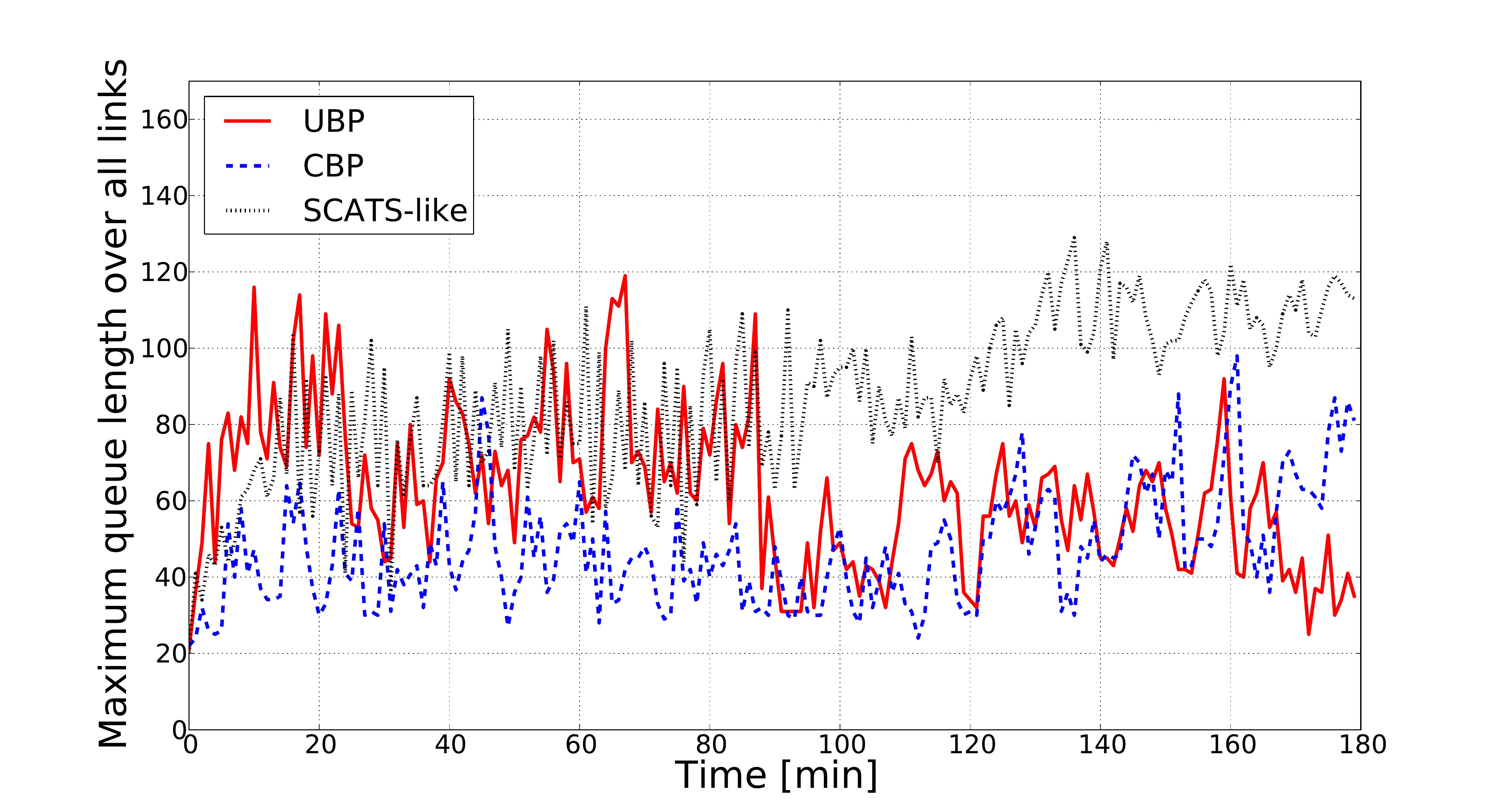}
        \includegraphics[trim=3cm 0cm 3cm 1cm, clip=true, width=0.49\textwidth]{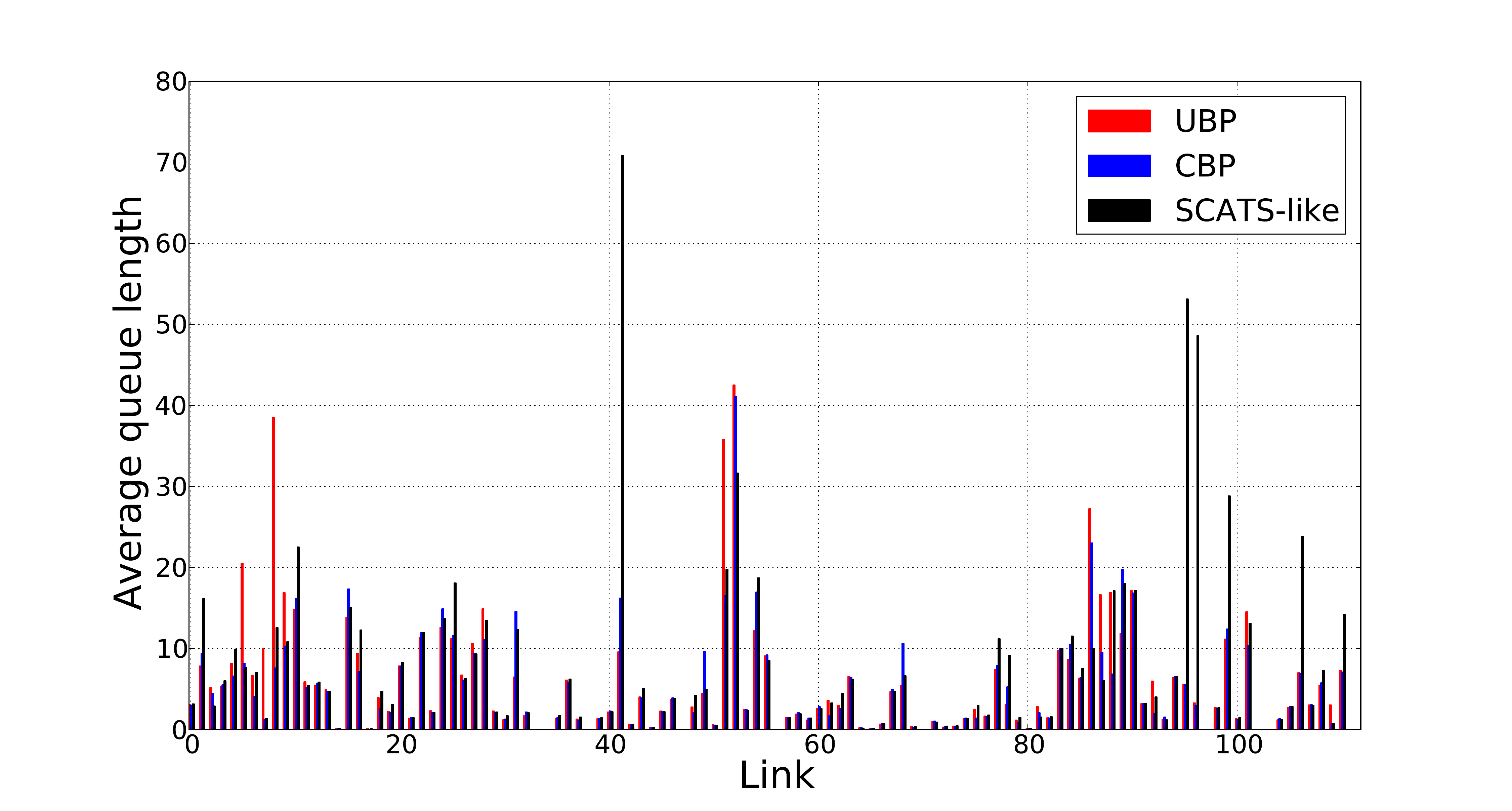}
        \includegraphics[trim=3cm 0cm 3cm 1cm, clip=true, width=0.49\textwidth]{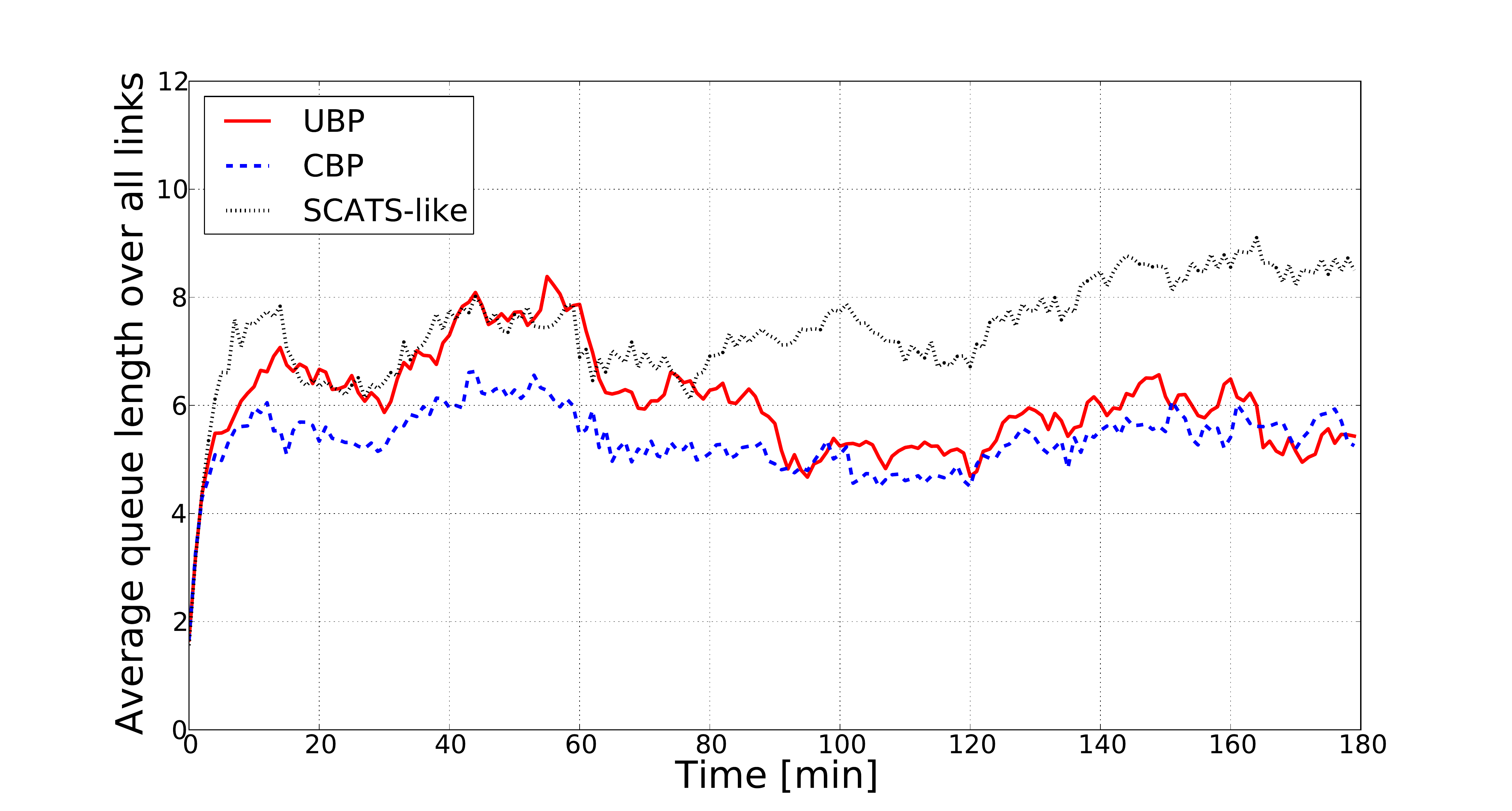}
   \caption{Simulation results showing maximum and average queue lengths when the SCATS-like and our backpressure-based traffic signal control algorithm are used with the calibrated origin-destination pairs.}
  \label{fig:sim-results-MITSIM1}
\end{figure}

The average and maximum delays for each origin-destination pair are shown in Figure \ref{fig:sim-MITSIM-delay1}.
When the SCATS-like algorithm, UBP and CBP are applied,
the average delays over all the vehicles are computed to be approximately 253,  249 and 202 seconds, respectively,
whereas the maximum delays are  2324, 2387 and 1324 seconds, respectively.
Finally, the average number of stops per vehicle on each link is shown in Figure \ref{fig:sim-MITSIM-stops1}.
The average numbers of stops per vehicle when the SCATS-like algorithm, UBP and CBP are applied are approximately 
3.6, 3.1 and 2.5 respectively.

\begin{figure}
   \centering 
        \includegraphics[trim=2.5cm 0cm 3cm 1cm, clip=true, width=0.49\textwidth]{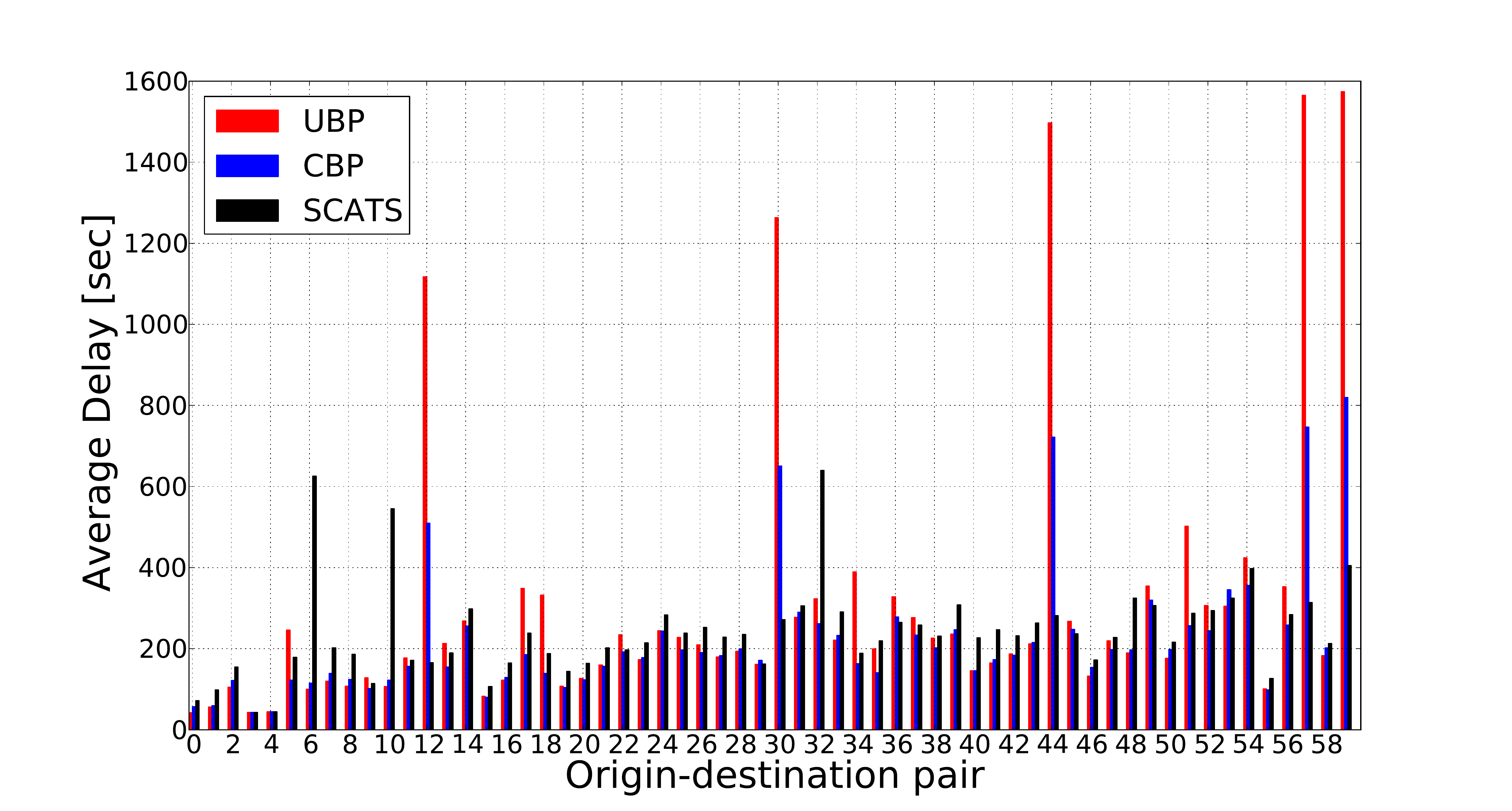}
        \includegraphics[trim=2.5cm 0cm 3cm 1cm, clip=true, width=0.49\textwidth]{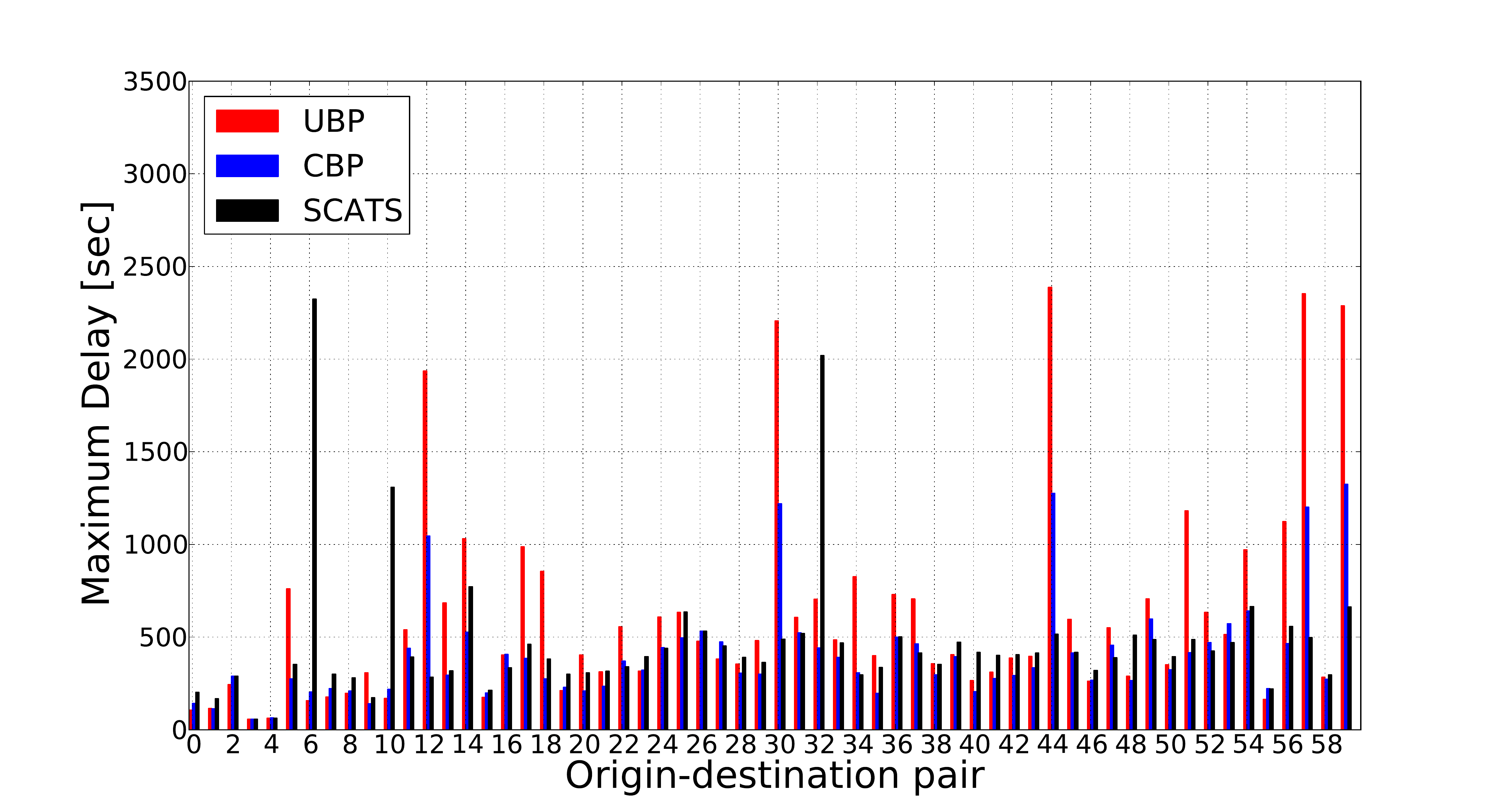}
   \caption{Simulation results showing 
   (left) average delay and 
   (right) maximum delay
   for each calibrated origin-destination pair when the SCATS-like and our backpressure-based traffic signal control algorithm are used.}
  \label{fig:sim-MITSIM-delay1}
\end{figure}

\begin{figure}
   \centering 
        \includegraphics[trim=3cm 0cm 3cm 1cm, clip=true, width=0.49\textwidth]{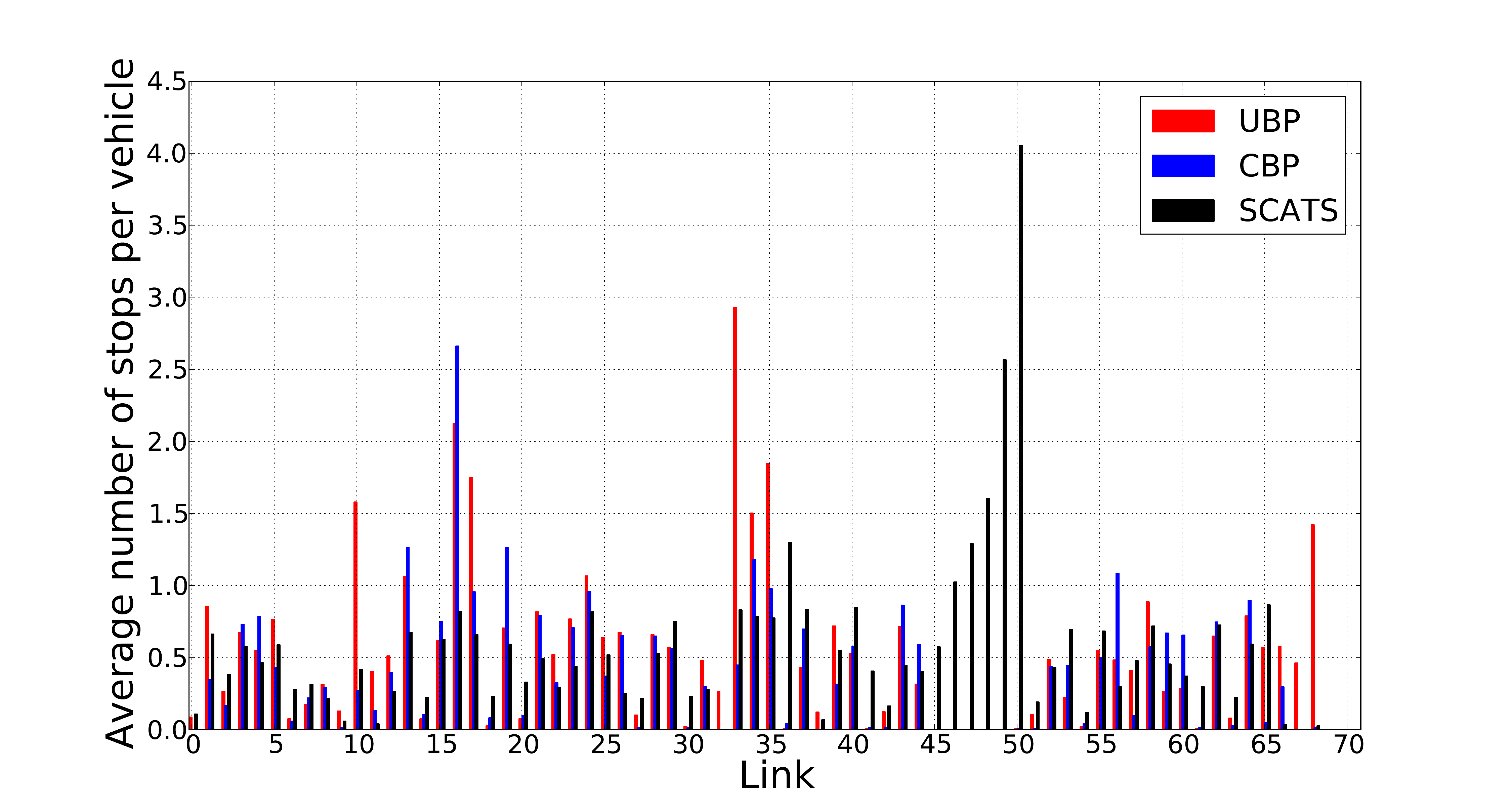}
   \caption{Simulation results showing 
   average number of stops per vehicle on each link 
   when the SCATS-like and our backpressure-based traffic signal control algorithm are used
   with the calibrated origin-destination pairs.
   Note that only links with a nonzero number of stops when both SCATS and our algorithm are applied are shown.}
  \label{fig:sim-MITSIM-stops1}
\end{figure}

The unexpected superior performance of CBP over UBP potentially results from the inaccurate estimate of the function
$\xi^{i}_{a,b}$ and the inaccurate queue length measurement in the simulation.
In the implementation of the backpressure algorithm in MITSIMLab, queue length on each link is obtained by subtracting
the count of the vehicles leaving the end of the link from the count of the vehicles entering the beginning of the link, thus failing
to account for vehicles that change lane (i.e., those that leave or enter the link anywhere besides either ends of the link).
This results in an inaccurate computation of the pressure relief associated with each phase.
Since in each time slot, CBP gives the right of way to all the phases, whereas
UBP only gives the right of way to a single phase,
CBP is expected to be more robust to such an inaccurate pressure relief computation
and an inaccurate estimate of the function $\xi^{i}_{a,b}$.

\subsection{Perturbed Origin-Destination Pairs}
In this case study, vehicles exogenously enter and exit the network at various links based on 46 different origin-destination pairs,
with the arrival rate of 9330 vehicles/hour.
The simulation video can be found at \url{http://youtu.be/Sk-d5-cfkDk}.
The maximum and average queue lengths are shown in Figure \ref{fig:sim-results-MITSIM2}.
Overall, the maximum queue lengths for the SCATS-like algorithm, UBP and CBP are
305, 56 and 69, respectively,
whereas the average queue lengths for the SCATS-like algorithm, UBP and CBP are approximately
8.8, 3.1 and 2.9, respectively.
These simulation results show that our algorithm
(both UBP and CBP)
can significantly reduce the maximum and average queue lengths compared to the SCATS-like algorithm.
In addition, as shown in Figure \ref{fig:sim-MITSIM-spillback}, 
queue spillback, where queues extend beyond one link upstream from the junction, persists throughout the simulation,
especially when the SCATS-like controller is used.

The average and maximum delays for each origin-destination pair are shown in Figure \ref{fig:sim-MITSIM-delay2}.
When the SCATS-like algorithm, UBP and CBP are applied,
the average delays over all the vehicles are computed to be approximately 277,  172 and 123 seconds, respectively,
whereas the maximum delays are  7954, 2430 and 558 seconds, respectively.

\begin{figure}
   \centering 
        \includegraphics[trim=3cm 0cm 3cm 1cm, clip=true, width=0.49\textwidth]{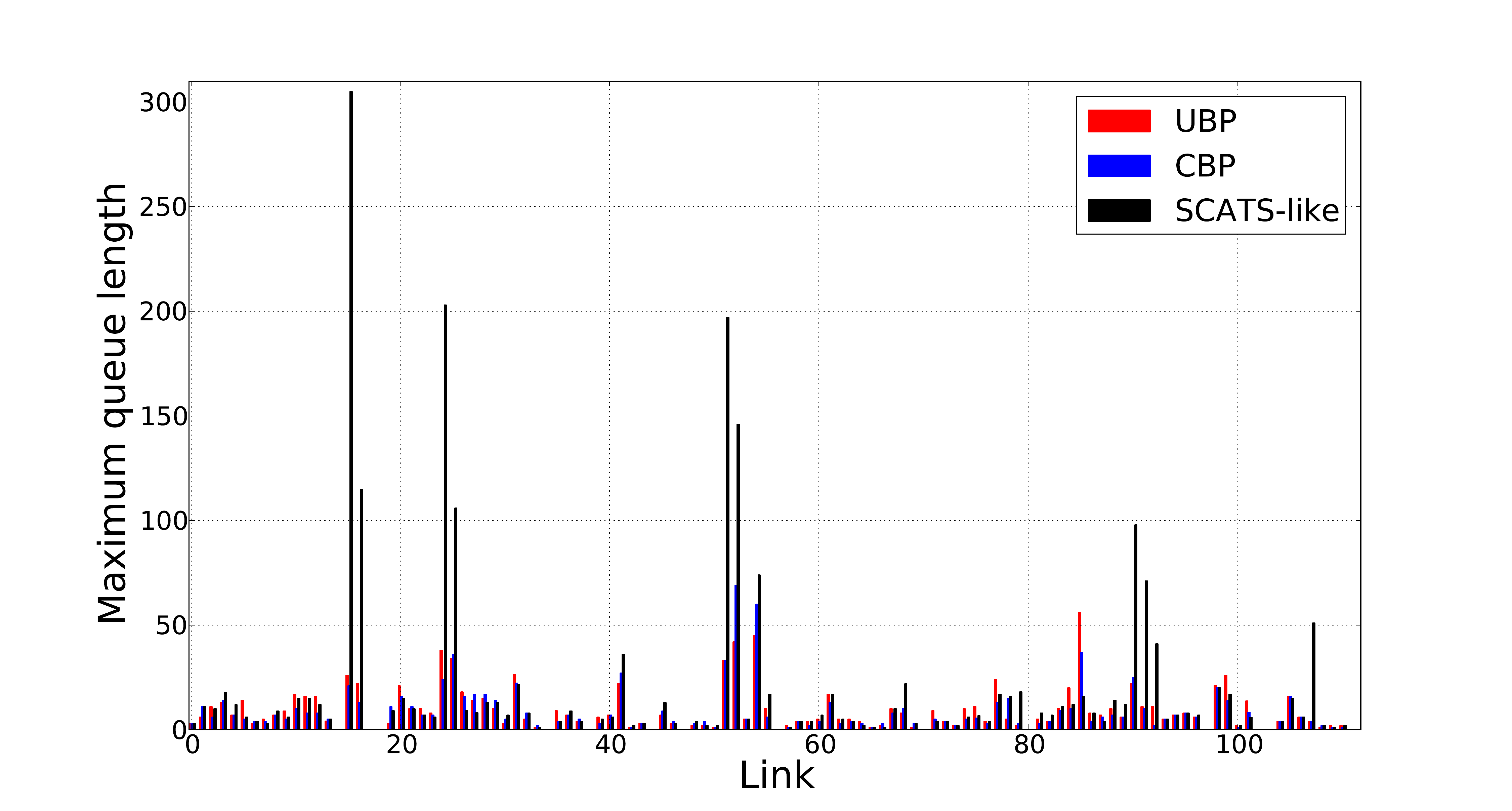}
        \includegraphics[trim=3cm 0cm 3cm 1cm, clip=true, width=0.49\textwidth]{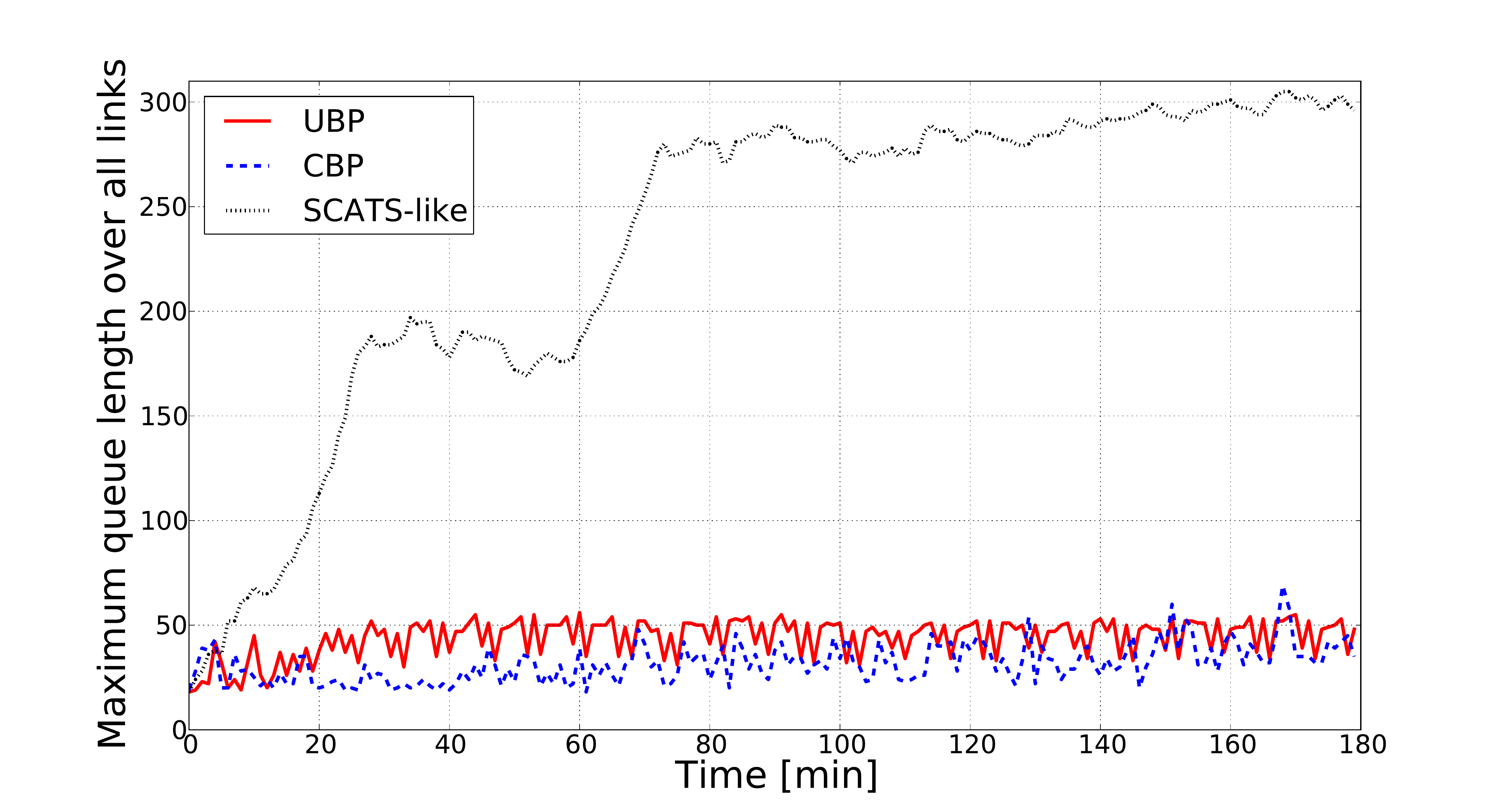}
        \includegraphics[trim=3cm 0cm 3cm 1cm, clip=true, width=0.49\textwidth]{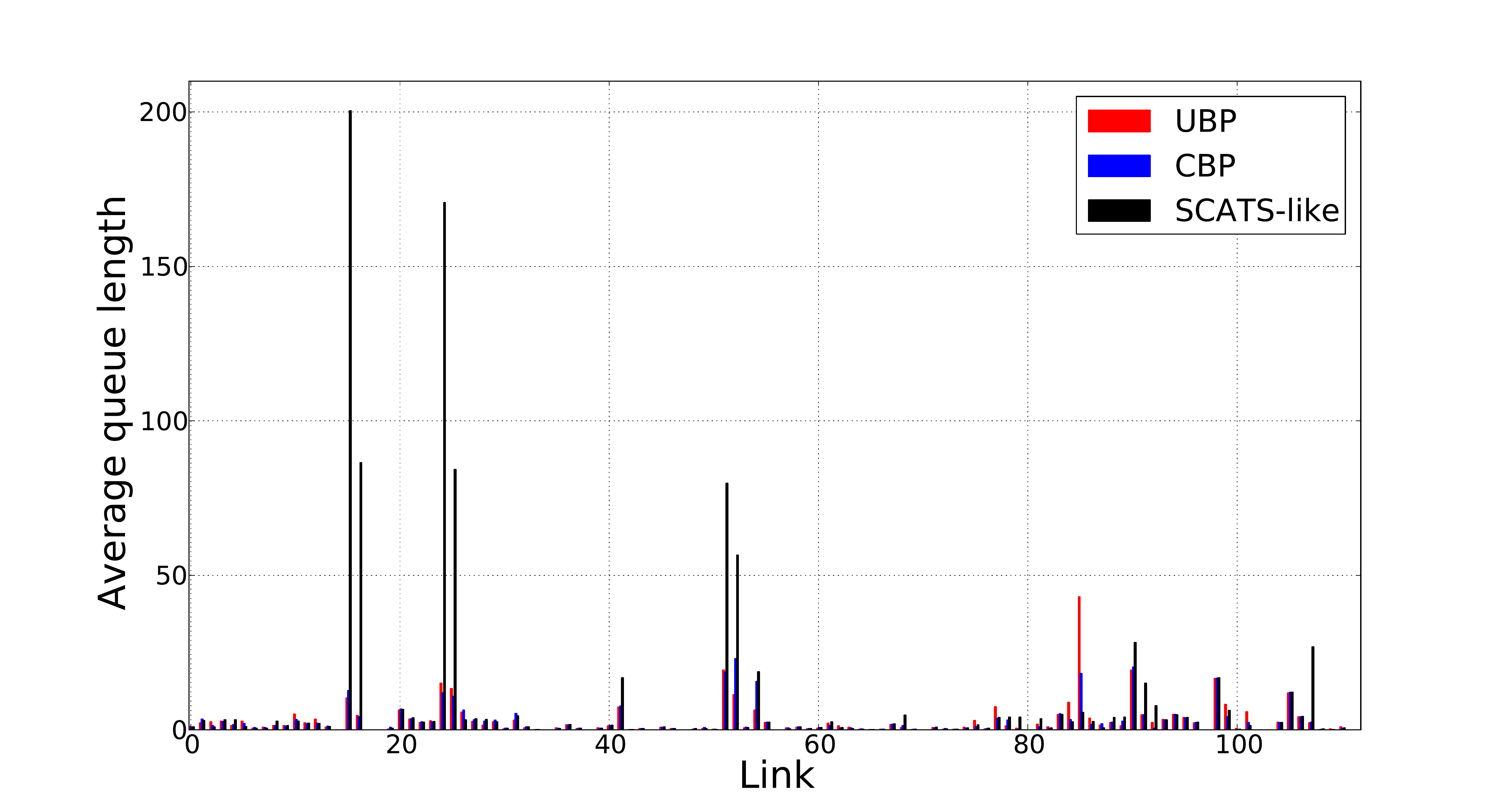}
        \includegraphics[trim=3cm 0cm 3cm 1cm, clip=true, width=0.49\textwidth]{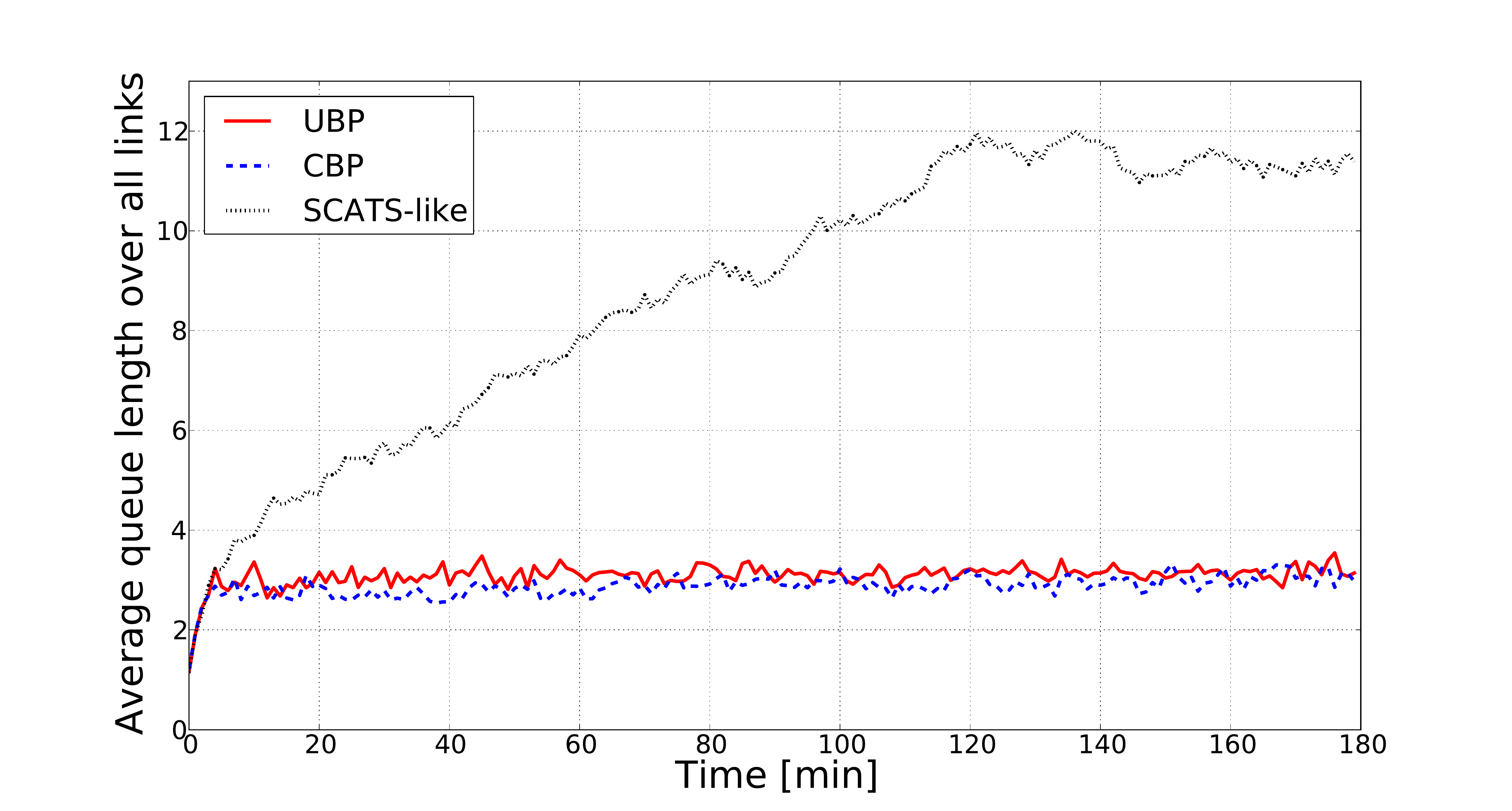}
   \caption{Simulation results showing maximum and average queue lengths when the SCATS-like and our backpressure-based traffic signal control algorithm are used with the perturbed origin-destination pairs.}
  \label{fig:sim-results-MITSIM2}
\end{figure}

\begin{figure}
   \centering 
        \includegraphics[trim=16.5cm 2.7cm 20.5cm 1.6cm, clip=true, width=0.3\textwidth]{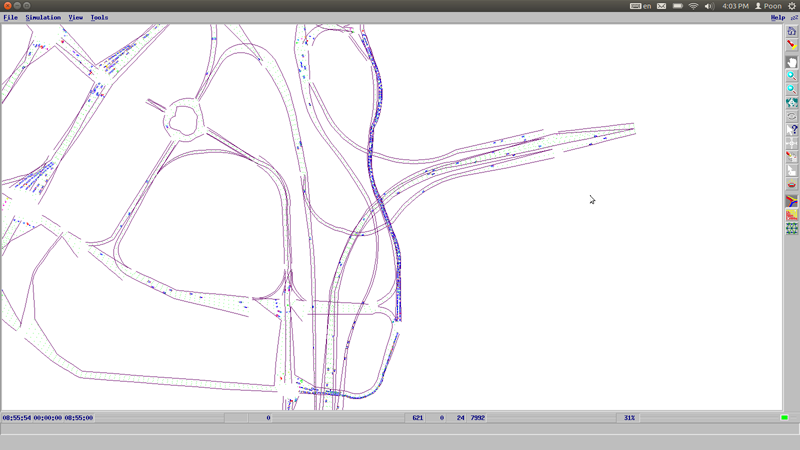}
        \hspace{30mm}
        \includegraphics[trim=16.5cm 2.7cm 20.5cm 1.6cm, clip=true, width=0.3\textwidth]{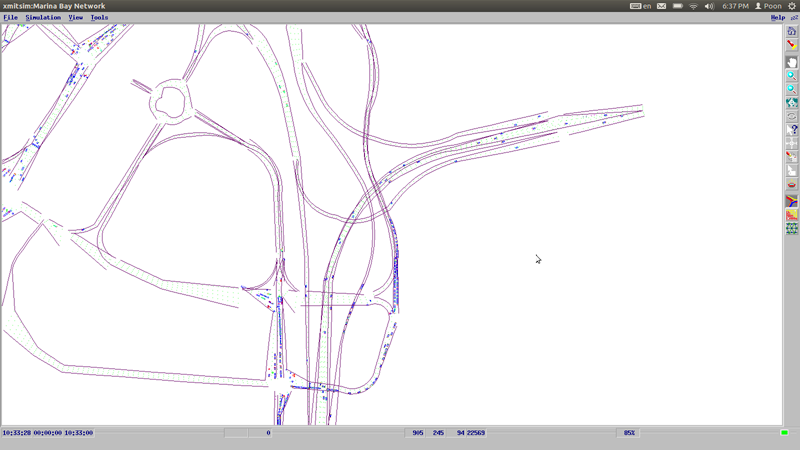}
   \caption{
   (left) Queues spread over multiple links upstream when the SCATS-like algorithm is used, and
   (right) Queues do not spread over as many links when our backpressure-based traffic signal control algorithm is used.
   The part of the road that is filled with blue is occupied by vehicles.}
  \label{fig:sim-MITSIM-spillback}
\end{figure}

\begin{figure}
   \centering 
        \includegraphics[trim=2.5cm 0cm 3cm 1cm, clip=true, width=0.49\textwidth]{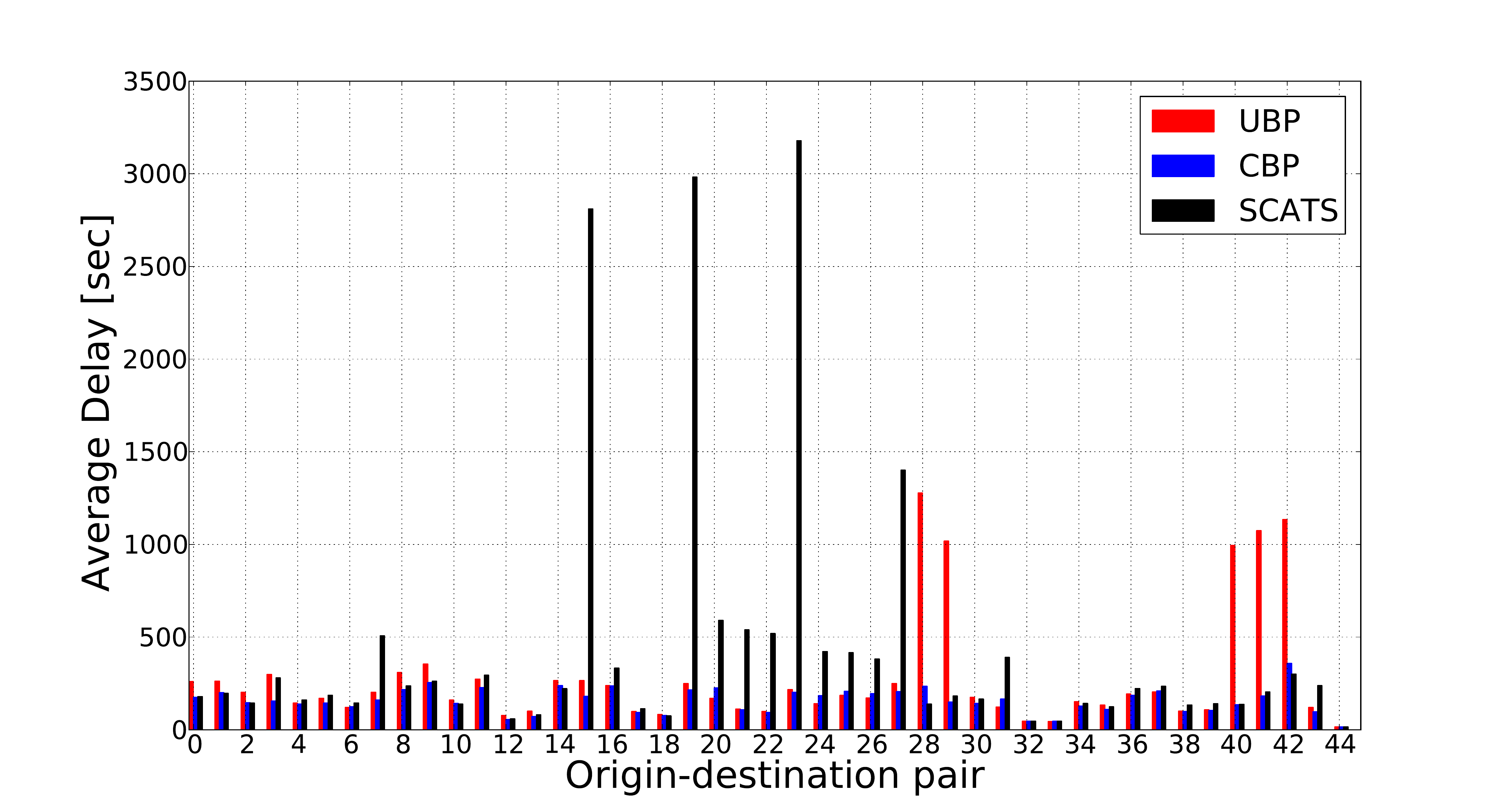}
        \includegraphics[trim=2.5cm 0cm 3cm 1cm, clip=true, width=0.49\textwidth]{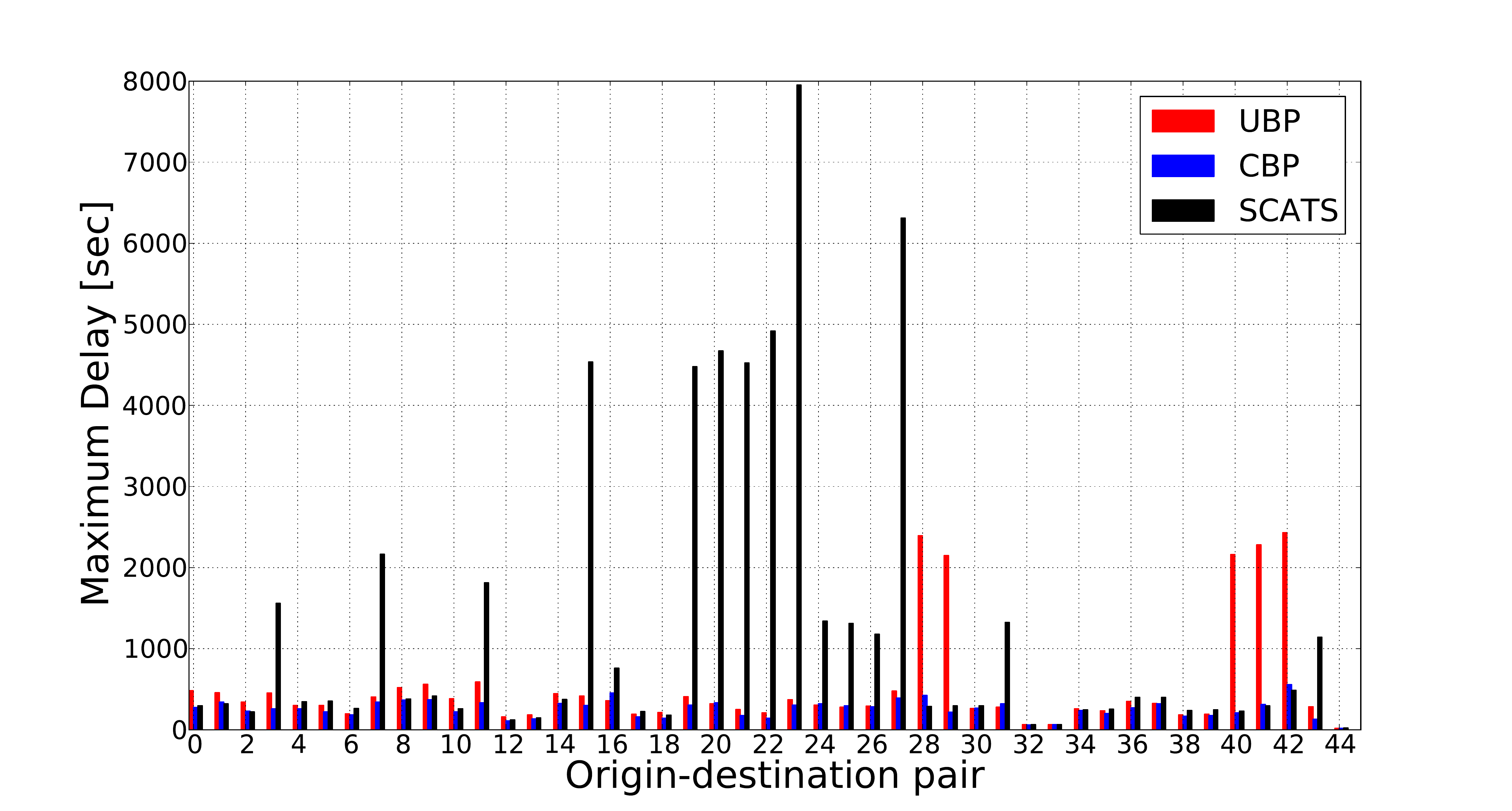}
   \caption{Simulation results showing 
   (left) average delay and 
   (right) maximum delay
   for each perturbed origin-destination pair when the SCATS-like and our backpressure-based traffic signal control algorithm are used.}
  \label{fig:sim-MITSIM-delay2}
\end{figure}

Finally, the average number of stops per vehicle on each link is shown in Figure \ref{fig:sim-MITSIM-stops2}.
The average numbers of stops per vehicle when the SCATS-like algorithm, UBP and CBP are applied are approximately 
7, 1 and 1, respectively.
This shows that even though our algorithm is completely distributed and does not explicitly
enforce the coordination among the traffic light controllers at neighboring junctions, a green wave is still achieved.

\begin{figure}
   \centering 
        \includegraphics[trim=3cm 0cm 3cm 1cm, clip=true, width=0.49\textwidth]{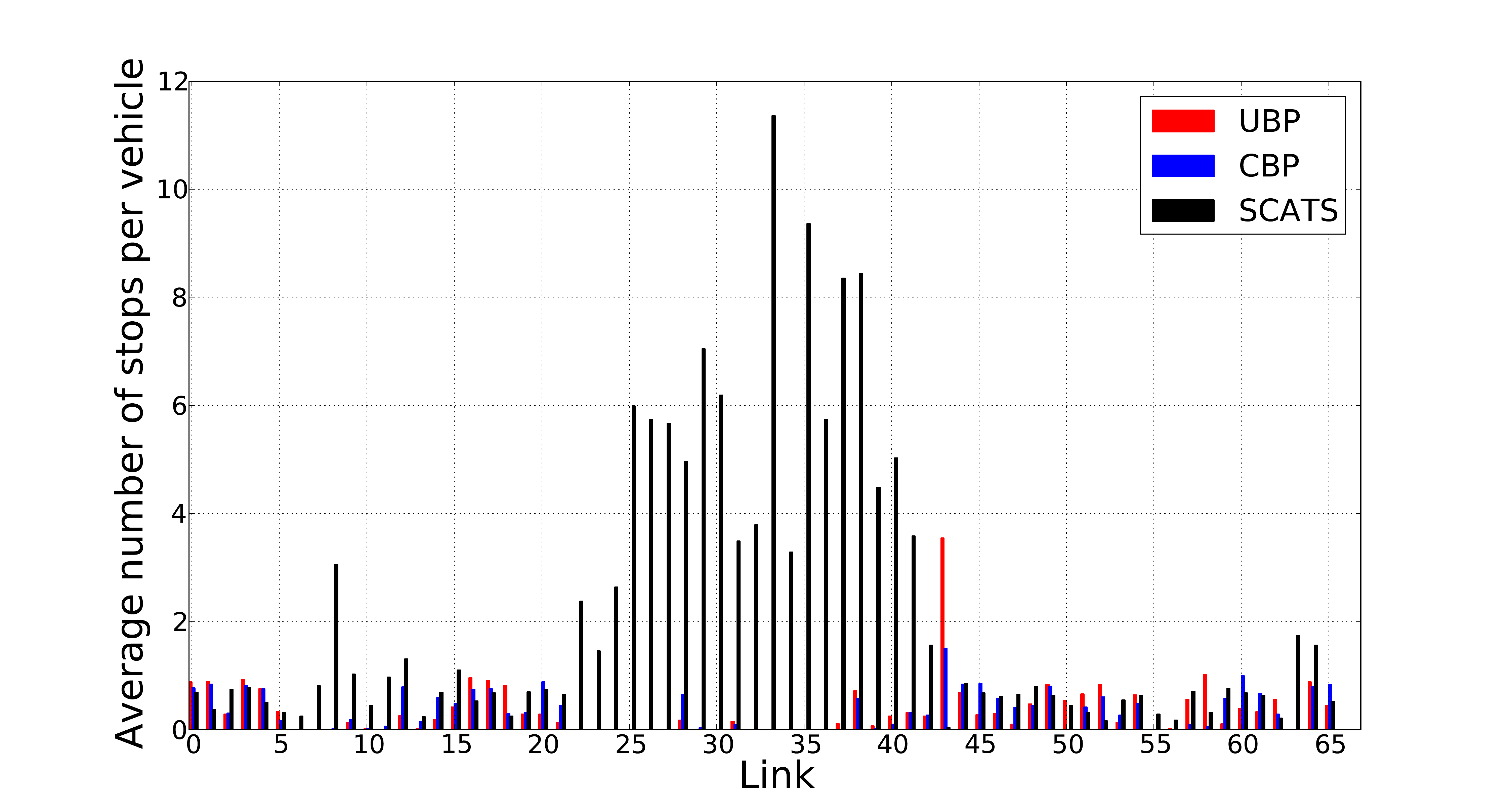}
   \caption{Simulation results showing 
   average number of stops per vehicle on each link 
   when the SCATS-like and our backpressure-based traffic signal control algorithm are used
   with the perturbed origin-destination pairs.
   Note that only links with a nonzero number of stops when both SCATS and our algorithm are applied are shown.}
  \label{fig:sim-MITSIM-stops2}
\end{figure}

\section{Discussion}
\label{sec:discussion}
The backpressure-based traffic signal control algorithm is a computationally simple and robust 
method that leads to maximum network throughput.
The algorithm is completely distributed, i.e., the signal at each junction is determined completely independently 
from other junctions.
As a result, it can be applied to an arbitrarily large network. 
Besides offering superior network performance based on standard measures such as queue length,
delay and number of stops, key advantages over existing algorithms include:
\begin{enumerate}
\item Ease of implementation: 
As opposed to SCATS where 
each junction needs to be identified as critical or non-critical and
all the possible split plans need to be pre-specified and tuned based on the characteristics of the traffic on the network,
the backpressure-based traffic signal control algorithm treats all the junctions exactly the same
and does not require a pre-defined set of all the possible split plans.

\item Robustness:
As the backpressure-based traffic signal control algorithm does not rely on a pre-defined set of split plans
and  an identification of critical junctions,
it is more robust to changes in the characteristics of the traffic and the network, including
changes in the origin-destination pairs 
(e.g., when a new structure is introduced to the network or an important event occurs),
and changes in the road conditions.

\item Computational simplicity: 
As opposed to existing optimization-based techniques where a large optimization problem needs to be solved,
considering the complete network,
the backpressure-based traffic signal control algorithm solves an optimization problem for each individual junction separately.
Hence, the size of the problem is independent of the size of the road network.
Furthermore, as discussed in Section \ref{sec:example},
for the special case where the flow rate through a junction for each phase is constant and only depends on the traffic state, 
the backpressure-based traffic signal control algorithm only requires a simple algebraic computation,
using only local information.

\end{enumerate}

\section{Conclusions and Future Work}
\label{sec:conclusions}
We considered distributed control of traffic signals.
Motivated by backpressure routing, which has been mainly applied to communication and power networks,
our approach relies on constructing a set of local controllers, each of which is associated with each junction.
These local controllers are constructed and implemented independently of one another.
Furthermore, each local controller does not require the global view of the road network.
Instead, it only requires information that is local to the junction with which it is associated.
Constraints such as periodic switching sequences of phases and minimum and maximum green time can be incorporated.
We formally proved that our algorithm leads to maximum network throughput even though
the controller is constructed and implemented in such a distributed manner
and no information about traffic arrival rates is provided.
Simulation results illustrate the effectiveness of our algorithm.

Future work includes taking into account road capacity.
We are also considering the coordination issue such as ensuring the emergence of green waves.
Finally, we are investigating methods for measuring the ``pressure relief'' without having to find the queue length on each link.

\section{ACKNOWLEDGMENTS}
The authors gratefully acknowledge Ketan Savla for the inspiring discussions, Prof. Moshe Ben-Akiva and his research group, in particular
Kakali Basak and Linbo Luo, for support with MITSIMLab,
and Land Transport Authority of Singapore for providing the data collected from the loop detectors
installed in the Marina Bay area of Singapore.
This work is supported in whole or in part by the Singapore National Research Foundation (NRF) through the Singapore-MIT Alliance for Research and Technology (SMART) Center for Future Urban Mobility (FM).

\bibliographystyle{IEEEtran}
\bibliography{IEEEabrv,ref}

\begin{thebibliography}{10}
\providecommand{\url}[1]{#1}
\csname url@samestyle\endcsname
\providecommand{\newblock}{\relax}
\providecommand{\bibinfo}[2]{#2}
\providecommand{\BIBentrySTDinterwordspacing}{\spaceskip=0pt\relax}
\providecommand{\BIBentryALTinterwordstretchfactor}{4}
\providecommand{\BIBentryALTinterwordspacing}{\spaceskip=\fontdimen2\font plus
\BIBentryALTinterwordstretchfactor\fontdimen3\font minus
  \fontdimen4\font\relax}
\providecommand{\BIBforeignlanguage}[2]{{%
\expandafter\ifx\csname l@#1\endcsname\relax
\typeout{** WARNING: IEEEtran.bst: No hyphenation pattern has been}%
\typeout{** loaded for the language `#1'. Using the pattern for}%
\typeout{** the default language instead.}%
\else
\language=\csname l@#1\endcsname
\fi
#2}}
\providecommand{\BIBdecl}{\relax}
\BIBdecl

\bibitem{Lowrie82SCATS}
P.~Lowrie, ``The {S}ydney coordinated adaptive traffic system: Principles,
  methodology, algorithms,'' in \emph{Proceedings of the IEE International
  Conference on Road Signalling}, 1982, pp. 67--70.

\bibitem{Keong93Glide}
C.~K. Keong, ``The {GLIDE} system : Singapore's urban traffic control system,''
  \emph{Transport reviews}, vol.~13, no.~4, 1993.

\bibitem{Liu03Master}
D.~Liu, ``Comparative evaluation of dynamic {TRANSYT} and {SCATS}-based signal
  control systems using {P}aramics simulation,'' Master's thesis, National
  University of Singapore, 2003.

\bibitem{Day1998}
I.~Day, S.~Ag, and R.~Whitelock, ``{SCOOT} - split, cycle \& offset
  optimization technique,'' \emph{Transportation Research}, pp. 1--46, 1998.

\bibitem{Stevanovic2008}
A.~Stevanovic and P.~T. Martin, ``Split-cycle offset optimization technique and
  coordinated actuated traffic control evaluated through microsimulation,''
  \emph{Transportation Research Record: Journal of the Transportation Research
  Board}, vol. 2080, pp. 48--56, 2008.

\bibitem{Diakaki02}
C.~Diakaki, M.~Papageorgiou, and K.~Aboudolas, ``A multivariable regulator
  approach to traffic-responsive network-wide signal control,'' \emph{Control
  Engineering Practice}, vol.~10, no.~2, pp. 183 -- 195, 2002.

\bibitem{Aboudolas08}
K.~Aboudolas, M.~Papageorgiou, and E.~Kosmatopoulos, ``Store-and-forward based
  methods for the signal control problem in large-scale congested urban road
  networks,'' \emph{Transportation Research Part C-Emerging Technologies},
  vol.~17, pp. 163--174, 2009.

\bibitem{Yu97thesis}
T.~Yu, ``On-line traffic signalization using robust feedback control,'' Ph.D.
  dissertation, Virginia Polytechnic Institute and State University, 1997.

\bibitem{Mladenovic11thesis}
M.~N. Mladenovi\'{c}, ``Modeling and assessment of state-of-the-art traffic
  control subsystems,'' Master's thesis, Virginia Polytechnic Institute and
  State University, 2011.

\bibitem{Soares08}
M.~dos Santos~Soares and J.~Vrancken, ``Responsive traffic signals designed
  with petri nets,'' in \emph{IEEE International Conference on Systems, Man and
  Cybernetics}, 2008.

\bibitem{Dujardin11}
Y.~Dujardin, F.~Boillot, D.~Vanderpooten, and P.~Vinant, ``Multiobjective and
  multimodal adaptive traffic light control on single junctions,'' in
  \emph{International IEEE Conference on Intelligent Transportation Systems
  (ITSC)}, 2011, pp. 1361--1368.

\bibitem{Shen11}
Z.~Shen, K.~Wang, and F.~Zhu, ``Agent-based traffic simulation and traffic
  signal timing optimization with {GPU},'' in \emph{International IEEE
  Conference on Intelligent Transportation Systems (ITSC)}, 2011, pp. 145--150.

\bibitem{Cheng09}
X.~Cheng and Z.~Yang, ``Distributed traffic signal control approach based on
  multi-agent,'' in \emph{Proceedings of the Sixth International Conference on
  Fuzzy Systems and Knowledge Discovery}, 2009, pp. 582--587.

\bibitem{Lammer08}
S.~L\"{a}mmer and D.~Helbing, ``Self-control of traffic lights and vehicle
  flows in urban road networks,'' \emph{Journal of Statistical Mechanics:
  Theory and Experiment}, vol. 2008, no.~4, pp. 183 -- 195, 2008.

\bibitem{Tassiulas92Stability}
L.~Tassiulas and A.~Ephremides, ``Stability properties of constrained queueing
  systems and scheduling policies for maximum throughput in multihop radio
  networks,'' \emph{IEEE Transaction on Automatic Control}, vol.~37, no.~12,
  pp. 1936--1948, 1992.

\bibitem{NMR05}
M.~J. Neely, E.~Modiano, and C.~E. Rohrs, ``Dynamic power allocation and
  routing for time-varying wireless networks,'' \emph{IEEE Journal on Selected
  Areas in Communications}, vol.~23, no.~1, pp. 89--103, 2005.

\bibitem{Georgiadis06}
L.~Georgiadis, M.~J. Neely, and L.~Tassiulas, ``Resource allocation and
  cross-layer control in wireless networks,'' \emph{Foundations and Trends in
  Networking}, vol.~1, pp. 1--144, 2006.

\bibitem{Wongpiromsarn:ITSC2012}
T.~Wongpiromsarn, T.~Uthaicharoenpong, Y.~Wang, E.~Frazzoli, and D.~Wang,
  ``Distributed traffic signal control for maximum network throughput,'' in
  \emph{Intelligent Transportation Systems Conference (ITSC)}, 2012.

\bibitem{Ben-Akiva01}
\BIBentryALTinterwordspacing
M.~Ben-Akiva, M.~Cortes, A.~Davol, H.~Koutsopoulos, and T.~Toledo,
  ``{MITSIMLab}: Enhancements and applications for urban networks,'' in
  \emph{9th World Conference on Transportation Research (WCTR)}, 2001.
  [Online]. Available: \url{http://mit.edu/its/mitsimlab.html}
\BIBentrySTDinterwordspacing

\end{thebibliography}
\end{document}